\documentclass[10pt,leqno]{article}

\usepackage[a4paper,tmargin=2cm,bmargin=2.2cm,rmargin=2.5cm,lmargin=2.5cm]{geometry}

\usepackage{amsmath,amsthm,amssymb}
\usepackage{mathtools}
\usepackage{centernot}
\usepackage{subcaption}
\usepackage[normalem]{ulem}
\usepackage{tikz}
\usepackage[utf8]{inputenc}
\usepackage[T1]{fontenc}
\usepackage{verbatim}
\usepackage{float}
\usepackage{authblk}
\usepackage{hyperref}
\usepackage{graphicx}

\newtheorem{Thm}{Theorem}
\newtheorem{Def}{Definition}

\newcommand{\K}{\mathcal{K}}
\newcommand{\M}{\mathcal{M}}
\newcommand{\sR}{\mathbb{R}}
\newcommand{\T}{\mathcal{T}}
\newcommand{\N}{\mathcal{N}}

	\newcommand{\Cc}{\mathcal{C}}
	\newcommand{\C}{\mathbb{C}}

	\DeclareMathOperator{\supp}{supp}

    \newcommand{\vc}{\vcentcolon =}

	\newcommand{\Dslash}{{\mathcal{D} \mkern-11.5mu/\,}} %% Dirac slash operator
	\newcommand{\Pf}{\mathfrak{P}}

	%%%%%%%%%%%%%%%% EDITING TOOLS %%%%%%%%%%%%%%
	
	%%% Margin notes
	\newcounter{mnotecount}[section]
	\renewcommand{\themnotecount}{\thesection.\arabic{mnotecount}}
	\newcommand{\mnote}[1]%{}
	{\protect{\stepcounter{mnotecount}}$^{\mbox{\footnotesize
				$%\!\!\!\!\!\!\,
				\bullet$\themnotecount}}$ \marginpar{%\color{red}%
			\raggedright\tiny\em
			$\!\!\!\!\!\!\,\bullet$\themnotecount: #1} }

	\definecolor{darkgreen}{rgb}{0,.5,0}

\usepackage{bm}
\newcommand{\bx}{\boldsymbol{x}}
\newcommand{\by}{\boldsymbol{y}}

\setlength{\affilsep}{0.5cm}
	%%%%%%%%%%%%%%%%%%%%%%%%%%

\title{
Quantifying superluminal signalling in Schr\"odinger--Newton model
}

\author[1,2]{Julia Os\k{e}ka-Lenart%\orcid{0009-0000-2594-1526}
}
\author[3,4]{Marcin P{\l}odzie\'n%\orcid{0000-0002-0835-1644}
}
\author[4,5]{Maciej Lewenstein}
\author[6]{Micha{\l} Eckstein%\orcid{0000-0003-4343-5031}
}
\affil[1]{Astronomical Observatory, Faculty of Physics, Astronomy and Applied Computer Science Jagiellonian University, Orla 171, Krak\'ow, 30-244, Ma{\l}opolska, Poland}

\affil[2]{Doctoral School of Exact and Natural Sciences, Jagiellonian University, ul.\ {\L}ojasiewicza 11, 30–348 Krak\'ow, Poland}

\affil[3]{Qilimanjaro Quantum Tech, Carrer de Vene\k{c}uela 74, 08019 Barcelona, Spain}

\affil[4]{ICFO-Institut de Ciencies Fotoniques, The Barcelona Institute of Science and Technology, 08860 Castelldefels (Barcelona), Spain}

\affil[5]{ICREA, Passeig Lluis Companys 23, 08010 Barcelona, Spain}

\affil[6]{Institute of Theoretical Physics, Faculty of Physics, Astronomy and Applied Computer Science, Jagiellonian University, ul.\ {\L}ojasiewicza 11, 30–348 Krak\'ow, Poland}

%\keywords{Schr\"odinger--Newton equation, relativistic causality, collapse models, quantum gravity}
\begin{document}

\maketitle

\begin{abstract}
The Schr\"odinger--Newton equation aims at describing the dynamics of massive quantum systems subject to the gravitational self-interaction. As a deterministic nonlinear quantum wave equation, it is generally believed to conflict with the relativistic no-signalling principle. Here we challenge this viewpoint and show that it is of key importance to study the quantitative and operational character of the superluminal effects. To this end we employ a rigorous formalism of probability measures on spacetime and quantify the probability of a successful superluminal bit transfer via the single-particle Schr\"odinger--Newton equation. We demonstrate that such a quantity decreases with the increasing size and mass of the system. Furthermore, we prove that the Einstein--Dirac system, which yields the Schr\"odinger--Newton equation in the non-relativistic limit, is perfectly compatible with the relativistic causal structure. Our study demonstrates that the Schr\"odinger--Newton equation, which is by construction non-relativistic, is in fact `more compatible' with the no-signalling principle than the ordinary free  Schr\"odinger equation.
\end{abstract}

\section{Introduction \& motivation}
		
		One of the most outstanding challenges in modern physics is to explain how the classical macro-world emerges from the inherently quantum micro-world. Are the laws of quantum mechanics universally valid or do they break down at some intermediate scale? This question has been studied intensively over the past years both from theoretical and empirical side \cite{OR,QG_RMP,Zurek_decoh}.
		
		A hallmark quantum feature, which is readily absent in the classical world, is the possibility for physical objects to be in spatial superposition. A possible solution to this conundrum is provided by the various \emph{objective collapse models} \cite{OR}. Their common denominator is the assumption that there exists a physical mechanism, which induces an effective localisation of the wave packets of sufficiently massive objects. These models are subject to increasing experimental scrutiny \cite{Superpos2019,Diosi21,collapse22}, with no evidence so far in favour of any of them.
		
		An exciting perspective opens up if one conjectures that gravity is somehow involved in the wavefunction collapse mechanism \cite{QG_RMP}. One of the early elegant models was developed by Di\'osi \cite{Diosi87} and Penrose \cite{Penrose96}. It is based on the so-called \textit{Schr\"odinger--Newton} equation, which describes a self-gravitating massive quantum particle \cite{SN_review_14}. It can be derived from a system of coupled Schr\"odinger--Poisson equations \cite{Penrose96} or semi-classical gravity \cite{Diosi87} or as a Hartee approximation to the full quantum gravity treatment \cite{SN_review_14}.

                Other nonlinear modifications of the Schr\"odinger equation, not directly related to gravity, have also been proposed in the literature \cite{IBB_nonlinear,WeinbergNQM,WeinbergNQM2,Czachor1998,CzachorMarcin,Czachor02,Kent05,Helou17,Caban20,Caban21,Kaplan2022,Paterek24,NJP2025}. All these fall into the class of \emph{deterministic} nonlinear quantum wave equations.
		
		There is, however, a major problem associated with deterministic nonlinear quantum dynamics: It seems to be irreconcilable with relativity. More precisely, Gisin has proven \cite{Gisin1989,Gisin2001} that nonlinear evolution of quantum states facilitates, via entanglement and projective measurements, an immediate information transfer between spacelike separated parties (see, however, \cite{Czachor1998,CzachorMarcin,Czachor02,Kent05,Helou17,NJP2025,Caban20,Caban21,Kaplan2022,Paterek24} for some models evading this problem).
		
		On the other hand, the standard \emph{linear} quantum dynamics also seems to facilitate superluminal signalling. Indeed, Hegerfeldt has shown \cite{Hegerfeldt1} that a strictly localised (i.e. compactly supported in space) initial wave packet immediately develops infinite tails under the Schr\"odinger evolution with a positive-definite Hamiltonian (see, however, \cite{Yngvason,QIandGR} for a critical interpretation). This result was then extended to Gaussian wave packets \cite{Hegerfeldt1985} and sharpened in \cite{PRA2017} basing on the formalism of Borel probability measures on spacetime \cite{AHP2017}. Finally, in \cite{PRA2020} it was shown that the Schr\"odinger equation typically does facilitate operational superluminal signalling via local detection, with the notable exception of the Dirac and photon equations \cite{PRA2017}.
		
		While this result sounds rather disturbing if one scrupulously applies the ``no-signalling principle'' (see e.g. \cite{Bell_Nonlocal,PR_box,PawelRaviCausality}), it has to be interpreted from a physical perspective. After all, it is not surprising that the \emph{nonrelativistic} quantum wave packet formalism is not compatible with the light-cone structure of the relativistic spacetime. The vital question is not \emph{if} causality is violated in a given model, but \emph{how much} it is violated. The tools developed in \cite{PRA2017,AHP2017,PRA2020} (see also \cite{WSWSG11}) allow for a rigorous formulation of this problem in terms of characteristic space- and time-scales and the probability of a successful superluminal bit transfer. Such an approach allows one to evaluate the adequacy of wavepacket evolution models, basing on a quantitative estimate of their incompatibility with the light-cone structure.

		\medskip
        
        The central goal of this paper is to assess and quantify the superluminal effects in the Schr\"odinger--Newton equation. This model is covered neither by Gisin's argument \cite{Gisin1989,Gisin2001}, which concerns multi-particle equations, nor by Hegerfeldt's theorem, which does not apply to nonlinear equations. %We focus on the one-particle equation \eqref{SN} in $(1+1)$-dimensions.  
        In order to gain a quantitative insight into the problem we employ the formalism of probability measures on spacetime \cite{AHP2017,PRA2017} and study the spreading of wavepackets. Our main result is that the nonlinear self-interaction term in the Schr\"odinger--Newton equation actually \emph{improves} the relativistic causality properties as compared to the free Schr\"odinger evolution.
        
        The paper is organised as follows:
        
        In Sec. \ref{sec:causal_viol_quant} 
we introduce the necessary elements of the formalism of causality theory for probability measures on spacetime from \cite{AHP2017,PRA2017}. It includes, in particular, a measure of causality violation along with its operational interpretation in terms of the probability of a successful superluminal bit transfer.

    Then, in Sec. \ref{sec:self_grav}, we discuss two dynamical models of self-gravitating quantum systems: In Sec. \ref{sec:S-N_intr} we summarise the Schr\"odinger--Newton equation and in Sec. \ref{sec:E-D} we discuss the Einstein--Dirac system, of which the Schr\"odinger--Newton equation is a rigorous non-relativistic approximation \cite{Giulini12}. We prove that the Einstein--Dirac system always induces a subluminal evolution of probabilities, and hence is perfectly compatible with relativistic causality.

    Sec. \ref{sec:S-N} contains a summary of our numerical studies. We employ a $(1+1)$-dimensional analogue of the Schr\"odinger--Newton equation, which facilitates the numerical simulations without additional symmetry assumptions, while keeping the essential features of the full $(3+1)$-dimensional model. In Sec. \ref{sec:Gauss} we investigate the spreading of an initial Gaussian wavepacket, which allows us to compare and contrast the outcomes with the results from  \cite{PRA2017} for the free nonrelativistic Schr\"odinger equation. We show that for small values of the self-coupling constant the gravitational attraction leads to localisation of the wavepacket and improved causality properties. However, for larger values of the gravitational self-coupling this picture breaks down because part of the mass is ejected to infinity with a superluminal velocity. % (cf. \cite{}).

    Then, in Sec. \ref{sec:trap}, we discuss a more sophisticated scenario with a direct operational interpretation: We introduce a box trapping potential, which can be released `at will' by an Alice. We compute the ground state of the system, study its spreading in time and calculate the probability that a Bob, located outside of Alice's future light cone, will detect the released system. We show that the amount of probability which propagates superluminally decreases monotonically with the increasing size and mass of the system.

    We conclude in Sec. \ref{sec:sum} with a summary and consequences of the obtained results. Therein, we also discuss the prospects of simulating the Schr\"odinger--Newton equation in Bose--Einstein condensate. Finally, we make an outlook into the possible future direction of studies and argue that the rejection of the Schr\"odinger--Newton equation on the basis of its alleged conflict with relativity is premature.

		\section{Relativistic causality for nonlocal phenomena}
		\label{sec:causal_viol_quant}
		
		We shall work in the generally covariant relativistic framework. In this setting, a \emph{spacetime} $\M$ is a 4-dimensional time-oriented manifold equipped with a Lorentzian metric (see e.g. \cite{Wald,Beem}). The \emph{causal structure} of $\M$ is a relation between the pairs of point (aka ``events'') in $\M$. Concretely, an event $q$ is said to be in the causal future of an event $p$, denoted $p \preceq q$ or $q \in J^+(p)$, if there exists a piecewise-smooth future-directed causal curve from $p$ to $q$ or $p=q$.
		
		Relativistic spacetimes can be classified into a `ladder-like' structure depending on their causality properties (see  \cite{MingRev,MS08} for the rudiments of causality theory). An important intermediate rung is populated by causal spacetimes, in which there are no closed causal loops, that is, equivalently, the relation $\preceq$ is a partial order. On top of the ladder one finds \emph{globally hyperbolic spacetimes}, which admit a (non-unique) smooth 3+1 splitting $\M \cong \mathbb{R} \times \Sigma$, with a Cauchy hypersurface $\Sigma$. This feature is necessary for the well-posedness of dynamical evolution problems involving Einstein equations coupled to matter \cite{Ringstrom}.
		
		The standard causality theory has been developed for point-like physical objects (`particles') and has a clear interpretation in the spirit of classical physics \cite{Ehlers}. However, in quantum theory the physical systems are typically delocalised in spacetime \cite{QIandGR} and the concept of relativistic causality becomes more obscure (see e.g. \cite{PawelRaviCausality}). Typically, in the context of quantum nonlocality \cite{Bell_Nonlocal}, one adopts the \emph{no-signalling principle}, which says there cannot be any \emph{operational} information transfer between spacelike-separated events in spacetime. In its essence, it means that while physics can be nonlocal it cannot facilitate superluminal communication through any physical devices, not even statistically. This is a rather natural demand, which prevents grandfather-type paradoxes. At the same time, as it turned out, the no-signalling constraints allow for `even-more-nonlocal' probabilistic theories beyond quantum mechanics \cite{PopescuRohrlich94,NPA08,AlmostQ,AQ_hierarchy,PawelRaviCausality,PR_box}. This is the context, in which Gisin's theorem \cite{Gisin1989,Gisin2001} was formulated, casting doubts over the consistency of the nonlinear deterministic modifications of quantum state dynamics.

		\subsection{Causality theory for probability measures}
		
		In \cite{AHP2017,Miller17a,Miller21} a mathematical formalism was tailored to rigorous studies of relativistic causality in nonlocal theories. It was then applied in \cite{PRA2017,PRA2020,JGP2021} to the problem of superluminal wavepacket spreading (cf. \cite{Gerlach1968,Gerlach1969,Gromes1970,Hegerfeldt1,Hegerfeldt1985,WSWSG11}). It has two primary assets: general covariance and operational quantification of causality violation. The former allows one to take into account gravity, also dynamically coupled to matter --- as in the Schr\"odinger--Newton equation \eqref{SN}. The latter allows to quantify the superluminal signalling in terms of characteristic time and length scales, as well as the probability of a successful bit transfer.
		
		Within the formalism \cite{AHP2017} the relevant probabilities are modelled through %Borel
        probability measures (called ``measures'' for short from now on) over a spacetime manifold $\M$. Given such a measure $\mu$ and a region of spacetime $\K \subset \M$ the number $\mu(\K) \in [0,1]$ designates the probability that the signal (which can be carried e.g. by a particle or a field, depending on the considered theory) is registered with a suitable detector operating in $\K$.
		
		On the mathematical side, the space $\Pf(\M)$ of all measures on $\M$ is a convex set, the extreme points of which are the Dirac deltas at the points of $\M$. It can also be seen as the space of all mixed states on the $C^*$-algebra of observables $\Cc(\M,\C)$. The pure states are in 1:1 correspondence with the points in $\M$.
		
		In \cite{AHP2017} it was shown (see also \cite{Suhr16}) that the causal relation $\preceq$ on $\M$ naturally extends to $\Pf(\M)$ via the Lorentzian optimal transport:
		\begin{Def}[\cite{AHP2017}]\label{def:caus}
			We say that a measure $\nu$ on spacetime $\M$ is in the causal future of a measure $\mu$, denoted by $\mu \preceq \nu$, if there exists a joint measure $\omega \in \Pf(\M \times \M)$ such that
			\begin{enumerate}%[i)]
				\item $\omega(A \times \M) = \mu(A)$ and $\omega(\M \times A) = \nu(A)$ for any Borel $A \subset \M$,
				\item $\omega(J^+) = 1$,
			\end{enumerate}
			where $J^+ = \{ (p,q) \in \M \times \M \, | \, p \preceq q \}$.
		\end{Def}
		
		The above definition encodes the following physical intuition:
		The existence of a joint probability measure $\omega$, with $\mu$ and $\nu$ as its marginals, provides a (non-unique) probability flow from $\mu$ to $\nu$. The condition $\omega(J^+) = 1$ says that the flow is conducted exclusively along future-directed causal curves. 
		
		If the spacetime $\M$ has sufficiently nice causal properties (concretely, if it is causally simple), then Definition \ref{def:caus} has a convenient equivalent formulation.
		
		\begin{Thm}[\cite{AHP2017}]\label{thm:caus}
			Let $\M$ be causally simple. We have, $\mu \preceq \nu$ if and only if
			\begin{align}\label{causK}
				\mu(\K) \leq \nu(J^+(\K)) \quad \text{ for every compact } \quad \K \subset \M.
			\end{align}
		\end{Thm}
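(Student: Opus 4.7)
The proof naturally splits into the two implications. The forward direction ($\Rightarrow$) is essentially a direct bookkeeping with marginals, while the converse ($\Leftarrow$) is the substantial part and requires invoking a Lorentzian version of Strassen's theorem from (Kantorovich--)optimal transport.

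For ($\Rightarrow$), suppose a joint measure $\omega \in \Pf(\M \times \M)$ witnesses $\mu \preceq \nu$. For any compact $\K \subset \M$,
\begin{align*}
\mu(\K) = \omega(\K \times \M) = \omega\bigl((\K \times \M) \cap J^+\bigr) \leq \omega\bigl(\M \times J^+(\K)\bigr) = \nu(J^+(\K)),
\end{align*}
where the second equality uses $\omega(J^+) = 1$ and the inclusion $(\K \times \M) \cap J^+ \subset \M \times J^+(\K)$ follows from the definition of $J^+(\K)$. This step needs no further structural assumption on $\M$.

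For ($\Leftarrow$), the plan is first to upgrade the hypothesis from compact test sets to all Borel sets, and then to appeal to a Strassen-type coupling theorem. Causal simplicity is exactly the right hypothesis: it means $\M$ is causal and $J^+(p)$ is closed for every $p$; equivalently, $J^+ \subset \M \times \M$ is a closed partial order, and $J^+(\K)$ is closed for every compact $\K$. Since $\M$ is a (second-countable, locally compact, Hausdorff) manifold, it is Polish and the Borel probability measures $\mu, \nu$ are inner regular: every Borel set is approximated from below in measure by compacta. Using this together with the monotonicity $\K \subset \K' \Rightarrow J^+(\K) \subset J^+(\K')$ and $\sigma$-additivity, one extends $\mu(\K) \leq \nu(J^+(\K))$ from compact $\K$ to all closed $\K$, and finally (through an open-regular approximation in the target) to all Borel sets in the appropriate sense.

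With the hypothesis in this form, one invokes Strassen's/Kellerer's theorem for closed preorders on Polish spaces: if $\preceq$ is a closed partial order on a Polish space $X$ and $\mu, \nu \in \Pf(X)$ satisfy the stochastic-dominance condition $\mu(A) \leq \nu(J^+(A))$ for all (closed or Borel) upward-saturated test sets $A$, then there exists $\omega \in \Pf(X \times X)$ with marginals $\mu, \nu$ and $\omega(J^+) = 1$. Applied to $X = \M$ with $\preceq$ the causal order, this produces exactly the joint measure required by Definition \ref{def:caus}, completing the proof.

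The main obstacle is the backward direction: one must verify that causal simplicity yields the topological hypotheses (closedness of $J^+$ as a subset of $\M \times \M$ and of $J^+(\K)$ for compact $\K$) needed to apply the Strassen--Kellerer coupling theorem, and that the compact-set inequality promotes, via inner regularity of Borel measures on the manifold $\M$, to the stochastic-dominance inequality in the form that the theorem consumes. Everything else reduces to standard measure-theoretic manipulations.
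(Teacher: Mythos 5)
Note first that the paper does not actually prove this statement: it is imported verbatim from \cite{AHP2017}, so the only meaningful comparison is with the proof given in that reference. Your proposal is correct and follows essentially the same route as that source: the forward implication is the marginal bookkeeping you give, and the converse rests precisely on causal simplicity making $J^+$ a closed partial order on $\M\times\M$ (hence $J^+(\K)$ closed for compact $\K$), on inner regularity of Borel probability measures on the Polish space $\M$ to upgrade the compact-set inequality to $\mu(F)\leq\nu(F)$ for closed future (i.e.\ upward-saturated) sets $F$, and on a Strassen--Kamae--Krengel--O'Brien type coupling theorem for closed (pre)orders on Polish spaces. Two minor points: your forward direction is not quite ``assumption-free'', since one needs $J^+(\K)$ to be Borel for $\nu(J^+(\K))$ to be defined, and this is exactly what causal simplicity supplies; and the further extension to all Borel saturated sets is superfluous, because closed (or open) up-sets already suffice as test sets for the coupling theorem.
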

		
		The formulation \eqref{causK} says that the probability must not leak out of the causal cone of any compact region of spacetime --- see Fig. \ref{fig:rel}.

		\subsection{Causal evolution of measures}
		
		Let us now assume that $\M$ is globally hyperbolic. We choose a Cauchy temporal function $\T: \M \to \sR$ and consider an \emph{evolution of measures} --- a one-parameter family $\{\mu_t\}_{t \in \sR}$ of measures supported on time-slices, $\supp \mu_t \subset \Sigma_t = \T^{-1}(t)$. Employing Definition \ref{def:caus} one arrives at a natural probabilistic generalisation of a future-directed causal curve.

		\begin{Def}[\cite{PRA2017}]\label{def:CE}
			An evolution of measures $\{\mu_t\}_t$ is \emph{causal} if
			\begin{align}\label{CE}
				\text{for all } \;  s \leq t, \qquad \mu_s \preceq \mu_t.
			\end{align}
		\end{Def}

        A family $\{\mu_t\}_{t \in \sR}$ models an evolution of detection statistics associated with some nonlocal physical phenomenon. More precisely, $\mu_t(\{t\}\times K)$ gives the probability that a detector occupying a volume $K$ would click at a time instant $t$. For instance, if $\psi(t,x)$ is a wave-function describing the evolution of a quantum system, then the position measurement yields 
        \begin{align}
            \mu_t(\{t\}\times K) = \int_K dx \, \vert \psi(t,x) \vert^2.
        \end{align}

        From the relativistic point of view, the time-evolution is always associated with a chosen slicing of spacetime manifold determined by a time function. A natural question is what happens to the property \eqref{CE}, if we choose another time function. It has been proven in \cite{Miller17a,Miller21} that this property is in fact \emph{slicing-independent}. More precisely, given a causal evolution of measures we can always find a geometrical object $\sigma$ (a measure on the space of future-directed causal worldlines), which is manifestly invariant. This means that the description of the same nonlocal physical phenomenon with the help of any other time function $\T'$ will also be causal.

        \subsection{An operational protocol for superluminal signalling}\label{sec:protocol}

        The relationship between the concept of a causal evolution of measures and the no-signalling principle has been analysed in detail in \cite{PRA2020}. In particular, it has been shown that a violation of the property \eqref{CE} always facilates an operational protocol for statistical superluminal bit transfer, under a rather mild assumption about the impact of a local measurement on the system at hand.

        Here we provide an alternative scenario, which allows to exploit the failure of \eqref{CE} for superluminal signalling: Suppose that Alice can locally trap a physical system in a stationary state and that she can also freely choose to release it. The evolution of the trapped system is trivial, i.e. $\mu_t = \mu_0$ for all $t \geq 0$. On the other hand, the trap release in general will induce a growing in time probability to find the system outside of the trap. More formally, let us denote by $m_K$ the Alice's decision of releasing ($m_K = 0$) or keeping ($m_K = 0$) the trap at a time instant $t=0$. The evolution of the untrapped system is modelled by a family of measures $\mu_t(\,\cdot\, | \,  m_K = 0)$ for $t\geq 0$. For the trapped system we clearly have
        \begin{align}
            \mu_t(\,\cdot\, | \,  m_K = 1) = \mu_0(\,\cdot\, | \,  m_K = 1) =  \mu_0(\,\cdot\, | \,  m_K = 0), \quad \text{ for all } \quad t > 0.
        \end{align}

        Suppose now that Bob has a device tuned to detect the system prepared by Alice in a region of space $C$. Then, Bob can infer the bit $m_K$ encoded by Alice from the detection statistics
         \begin{align}
            \mu_t \big( \{t\} \times C \, \big\vert \,  m_K = 1 \big) \neq \mu_t \big( \{t\} \times C \,  \big\vert \,  m_K = 0 \big).
        \end{align}
        Technically, it would require the preparation by Alice of an ensemble of same systems, simultaneously trapped or released, followed by Bob's gathering of information from multiple detectors.

        If the spacetime region $\{t\} \times C$ is \emph{not} in the causal future of the spacetime region $\K \vc \{0\} \times K$, then the information transfer from Alice to Bob is superluminal.

        Suppose now that the evolution of the measure $\mu_t(\,\cdot\, | \,  m_K = 1)$ is \emph{not} causal in the sense of Definition \ref{def:CE}. Then, on the strength of Theorem \ref{thm:caus}, there exists a compact region of spacetime $\Cc \subset \{t\} \times \Sigma_t$ outside of the causal future of $\K$, i.e. $\Cc \subset \M \setminus J^+(\K)$, such that\footnote{Technically, this follows from the fact any measure on spacetime manifold is tight \cite{AHP2017}.}
         \begin{align}
            \mu_t \big( \Cc \, \big\vert \,  m_K = 1 \big) \neq \mu_0 \big( \K \,\big\vert \,  m_K = 1 \big).
        \end{align}
        Consequently, if Bob fills the region $C$, corresponding to $\Cc = \{t\} \times C$, with detectors, he will be able to statistically decode Alice's message. See Figure \ref{fig:rel} for an illustration.

        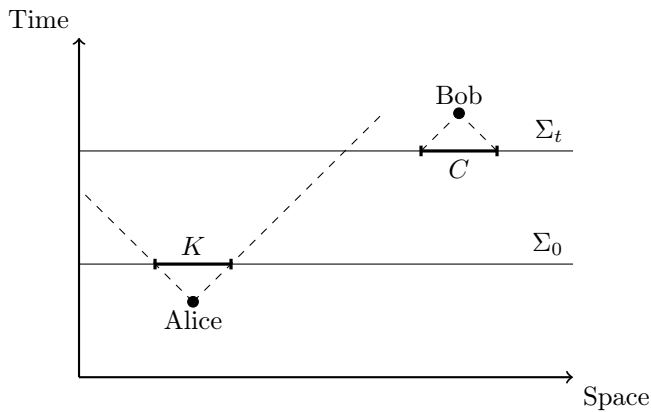
\begin{figure}[H]
    \centering
    \begin{tikzpicture}
    \draw[thick,->] (0,0) -- (6.5,0) node[anchor=north west] {Space};
    \draw[thick,->] (0,0) -- (0,4.5) node[anchor=south east] {Time};
    \draw (0,1.5) -- (6.5,1.5) node[anchor=south east] {$\Sigma_0$};
    \draw (0,3) -- (6.5,3) node[anchor=south east] {$\Sigma_t$};
    \draw[very thick] (1,1.5 cm +2 pt) -- (1,1.5 cm-2 pt);
    \draw[very thick] (2,1.5 cm +2 pt) -- (2,1.5 cm-2 pt);
    \draw[very thick] (1,1.5) -- (2,1.5) %node[anchor=south east] {$2R$}
    ;
    \filldraw[black] (1.5,1.5) circle (0pt) node[anchor=south]{$K$};
    \filldraw[black] (1.5,1) circle (2pt) node[anchor=north]{Alice};
    \draw[dashed] (1.5,1) -- (0,2.5);
    \draw[dashed] (1.5,1) -- (4,3.5);
    \draw[very thick] (4.5,3 cm +2 pt) -- (4.5,3 cm-2 pt);
    \draw[very thick] (5.5,3 cm +2 pt) -- (5.5,3 cm-2 pt);
    \draw[very thick] (4.5,3) -- (5.5,3);
    \filldraw[black] (5,3.5) circle (2pt) node[anchor=south]{Bob};
    \draw[dashed] (4.5,3) -- (5,3.5);
    \draw[dashed] (5.5,3) -- (5,3.5);
    \filldraw[black] (5,3) circle (0pt) node[anchor=north]{$C$};
\end{tikzpicture}
    \caption{Visualization of the superluminal signalling protocol exploiting the violation of condition \eqref{CE}. Alice prepares a system trapped in region $K$ and decides whether to release it or not. Bob gathers information from the detection statistic in region $C$. If condition \eqref{CE} is violated, then Bob can infer (with some probability) Alice's decision.}
    \label{fig:rel}
\end{figure}

        Suppose that the physical system in consideration is described by a wavefunction, as we shall assume from Section \ref{sec:E-D} on. Then, this scenario falls in into the class of communication protocols, where information is encoded by a local change of dynamics, rather than a projective measurement (see \cite{NJP2025}). Let us also observe that we do \emph{not} require the initial wavepacket $\psi_0$, which determines the measure $\mu_0$, to be compactly supported. What counts is the \emph{difference} between Bob's probabilities to detect the particle if the trap is on or off. This sets our work outside of the context of Hegerfeldt's theorem, which has been criticised precisely on the ground of impossibility to operationally prepare a compactly supported wavepacket \cite{Yngvason} (see also \cite{QIandGR,Yngvason2015}).

        \subsection{Quantification of causality violation}

        One of the advantages of the measure-theoretic formalism developed in \cite{AHP2017} is that it allows for quantification of relativistic causality violation in a given physical model. A natural measure associated with the characterisation \eqref{causK}, put forward in \cite{PRA2017}, is the following
        \begin{align}\label{M}
            M(t,K) \vc \max \big\{ 0, \mu_0(\{0\} \times K) - \mu_t \big( J^+(\{0\} \times K) \big) \big\}.
        \end{align}
        The number $M(t,K)$ is the amount of probability found on the time-slice $\Sigma_t$, that leaked out of the future light-cone of the region $K$. If the evolution $\{\mu_t\}_{t \geq 0}$ is causal in the sense of Definition \ref{def:CE} then $M(t,K)=0$ for all $t$ and $K$. In the context of the protocol described in Sec. \ref{sec:protocol} the number $M(t,K) \in [0,1]$ is the maximal success rate of a single bit  superluminal transfer from Alice to Bob, with Alice operating a trap in region $K$.

\section{Self-gravitating quantum systems}\label{sec:self_grav}

\subsection{Schr\"odinger--Newton equation}\label{sec:S-N_intr}

         The so-called \textit{Schr\"odinger--Newton} equation describes a self-gravitating massive quantum particle \cite{SN_review_14}. It arises from a coupled system of Schr\"odinger and Poisson equations \cite{Penrose96}:
		\begin{align}
			i \,\hbar \, \partial_t \psi(t,\bx)& = \Big(-\frac{\hbar^2}{2m}\nabla^2 + V(t,\bx) + m +\Phi(t,\bx)\Big) \psi(t,\bx), \\ \nabla^2 \Phi(t,\bx)  & = 4\pi G m |\psi(t,\bx)|^2, 
		\end{align}
		where $\psi$  denotes the wave function, $t$ is time, $\bx \in \mathbb{R}^3$ is the position, $m$ is the mass of the system, $G$ is the gravitational constant and $V$ is an external potential. By solving the Poisson equation one arrives at the following nonlinear quantum wave equation
		\begin{align}\label{SN}
			i\, \hbar \, \partial_t \psi(t,\bx) = \Big(-\frac{\hbar^2}{2m}\nabla^2+V(t,\bx)-Gm^2\int\frac{|\psi(t,\by)|^2}{|\bx-\by|}d^3\by \Big)\psi(t,\bx).
		\end{align}
        
        Alternatively, Eq. \eqref{SN} can be derived from the semi-classical approach to quantum field theory on curved spacetimes \cite{Diosi87}. Both these derivations are based on the assumption that gravity remains \emph{fundamentally} classical, while all matter fields are quantised \cite{Diosi87,Penrose96,Penrose98,Penrose2014}. A different route to the Schr\"odinger--Newton equation leads through a mean-field Hartree approximation to the (canonically) quantised gravity \cite{SN_review_14}. Consequently, the physical interpretation of Eq. \eqref{SN} depends on the context, in which it was derived. 
        
        It is also important to stress that Eq. \eqref{SN} is a \emph{deterministic} quantum wave equation in the sense that an initial state $\psi(0,\bx) \in L^2(\sR^3)$ determines the evolution of the wavefunction for all $t\geq 0$. This should be distinguished from a \emph{stochastic} version of the Schr\"odinger--Newton equation \cite{Diosi89} called in the literature the Di\'osi--Penrose model \cite{OR,Diosi21,collapse22}.

        The Eq. \eqref{SN} has been thoroughly analysed in the literature. The stationary states were investigated in some detail in \cite{Moroz98,Bernstein98,Tod2001,Harrison2001}. The studies of the Schr\"odinger--Newton dynamics \cite{Harrison2003,Salzman2005,Carlip2008,Giulini2011,Giulini2013,Meter2011,Manfredi2013} revealed that, above a certain mass threshold, the wavepackets tend to localise (see also \cite{DunajskiPenrose23}). The latter feature will prove to be of key importance in our studies of superluminal spreading in Sec. \ref{sec:S-N}.

        The Schr\"odinger--Newton equation appears also in different physical contexts, as an effective equation. It is invoked in some cold dark matter models \cite{Schr_Poiss_Vlas_Poiss,2024PhRvR_VP_SP,2025arXiv_fluid_analog_SN} and in the description of boson stars evolution \cite{boson_stars_NatCom16}. These exotic compact objects can be an alternative to what is currently observed and interpreted as black holes. 
        
		\subsection{Einstein--Dirac system}
\label{sec:E-D}

In \cite{Giulini12} it was shown that the Schr\"odinger--Newton equation \eqref{SN} can be derived as the nonrelativistic limit of the Einstein--Dirac system. More precisely, this result was derived under the assumption of spherical symmetry of the spacetime metric. The coupled Einstein--Dirac equations read:
\begin{align}\label{ED}
 & G_{\mu \nu} = \frac{8\pi \, G}{c^4}\, T_{\mu \nu}(\psi), \\
    & \Big(\Dslash-\frac{mc}{\hbar}\Big)\psi = 0, \label{Dirac}
\end{align}
where $G_{\mu \nu}$ is the Einstein tensor, $\Dslash$ is the Dirac operator associated with the spin connection \cite{Finster98,Finster99} and $T_{\mu \nu}(\psi)$ denotes the stress--energy tensor for the Dirac field $\psi$ \cite{Giulini12}.

In \cite{PRA2017} it has been shown that the free Dirac equation \eqref{Dirac} in Minkowski spacetime induces a causal evolution of measures in the sense of Definition \ref{def:CE}, for any initial spinor. This does not contradict Hegerfeldt's theorem \cite{Hegerfeldt1}, because the Dirac Hamiltonian (in any local frame) is not bounded from below \cite{Thaller}. This result is somewhat surprising (cf. \cite{WSWSG11}) as it shows that the existence of negative-energy solutions, which is commonly taken as a major drawback of one-particle Dirac's theory, does not spoil the relativistic causal character of the evolution. Furthermore, this result still holds in presence of an external electromagnetic or Yang--Mills potential \cite{PRA2017}.

We now show that the system \eqref{ED} with suitable initial conditions also induces a causal evolution of measures in the precise sense of Definition \ref{def:CE}. Let us emphasise that this is a very significant improvement of the theorem derived in \cite{PRA2017}. Firstly, this is so, because it involves the Dirac equation in a curved spacetime and the proof bases on a recent powerful result by Miller \cite{Miller21}, which tights the causal evolution of measures with a generally covariant continuity equation. Secondly, the spacetime metric in the Einstein--Dirac model is no longer a background for the evolution of probabilities, but it couples dynamically to the wavefunction via Einstein equations \eqref{ED}. Nevertheless, the recent results on the existence of solutions to the Einstein--Dirac system \cite{EinsteinDirac25} enable us to set the evolution problem rigorously.

\begin{Thm}\label{thm:ED}
    The Einstein--Dirac equations induce a causal evolution of measures for any sensible initial data and any space--time splitting.
\end{Thm}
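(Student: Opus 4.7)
The plan is to reduce the statement to an application of Miller's theorem \cite{Miller21}, which ties a causal evolution of measures to a generally covariant continuity equation with a future-directed causal current. Concretely, I would show that the Dirac current associated to any solution of \eqref{ED}--\eqref{Dirac} is conserved and causal, even when the underlying metric is not a fixed background but is dynamically coupled to $\psi$.

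First, I would invoke the well-posedness results of \cite{EinsteinDirac25} to fix what is meant by ``sensible initial data'': a triple $(\Sigma_0, h, K, \psi_0)$ on a smooth 3-manifold satisfying the Einstein--Dirac constraint equations, giving rise to a globally hyperbolic maximal Cauchy development $(\M,g,\psi)$. This is what makes Definition \ref{def:CE} and Theorem \ref{thm:caus} applicable at all, and it is also the step where slicing independence \cite{Miller17a,Miller21} enters: once causality is established for one Cauchy temporal function $\T$, it automatically holds for any other.

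Next I would construct the relevant current. Using the spin structure induced by $g$, with local tetrad $e^\mu_a$ and spin connection, define
\begin{align}
    j^\mu \vc \bar\psi \, \gamma^\mu \, \psi
\end{align}
as a globally defined vector field on $\M$. Two properties need to be verified, both \emph{locally} and \emph{algebraically}: (i) causality of $j^\mu$, which is the standard pointwise fact that $\bar\psi \gamma^\mu \psi$ is future-directed timelike or null for every spinor $\psi$ (independent of what $g$ is); and (ii) conservation, $\nabla_\mu j^\mu = 0$, which follows from \eqref{Dirac} and its Dirac-adjoint, using metric compatibility of $\nabla$ with $\gamma^\mu$ so that the spin connection terms cancel. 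Crucially, both checks are purely local, so the dynamical coupling of $g$ to $\psi$ via \eqref{ED} plays no role here. The evolution of probabilities is then defined on each leaf $\Sigma_t$ by
\begin{align}
    \mu_t(A) \vc \int_A j^\mu n_\mu \, d\mathrm{vol}_{\Sigma_t},
\end{align}
with $n^\mu$ the future-directed unit normal, and the above two properties guarantee $\mu_t \in \Pf(\M)$ for every $t$. The proof then closes by invoking Miller's theorem \cite{Miller21}, which converts the pair (causal current, continuity equation) directly into the causality property \eqref{CE}.

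The main obstacle, in my view, is not the local algebraic verification of causality and conservation of $j^\mu$ (these are classical and survive the passage to curved spacetime unchanged) but rather handling the dynamical metric carefully enough to cite Miller's theorem without sleight of hand. In particular, one needs that the maximal development supplied by \cite{EinsteinDirac25} is genuinely globally hyperbolic for the class of data considered, and that $j^\mu$ is sufficiently regular so that the induced family $\{\mu_t\}$ consists of honest Borel probability measures on the Cauchy slices. A secondary subtlety is the normalisation: one has to check that $\int_{\Sigma_0} j^\mu n_\mu \, d\mathrm{vol} = 1$ is preserved, but this is exactly what $\nabla_\mu j^\mu = 0$ together with Stokes' theorem on the globally hyperbolic slab between $\Sigma_s$ and $\Sigma_t$ delivers, with no contribution from timelike infinity thanks to causality of $j^\mu$.
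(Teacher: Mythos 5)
Your proposal follows essentially the same route as the paper: show that the Dirac current $j^\mu$ is conserved and causal (both purely local facts, insensitive to the metric being dynamical), then apply Miller's theorem \cite{Miller21} to turn the continuity equation with a causal current into causality of the induced evolution of measures, with \cite{EinsteinDirac25} supplying existence of solutions and slicing-independence handling the arbitrary splitting. The only minor discrepancies are cosmetic: the existence result the paper actually cites is a semi-global characteristic one (data on two intersecting null hypersurfaces) rather than the spacelike Cauchy formulation you describe, and where you quote the standard algebraic fact that $\bar\psi\gamma^\mu\psi$ is future-directed causal, the paper spells out a short local-frame contradiction argument --- neither changes the structure of the proof.
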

\begin{proof}
The solutions of Einstein--Dirac equations which we are interested in consist of a global hyperbolic spacetime $\M$ with a metric tensor $g^{\mu \nu}$ and global spinor field $\psi\in L^2(\M,S(\M))$. In \cite{EinsteinDirac25} it was proven that solutions to the system \eqref{ED} exist semi-globally for initial data specified on two null hypersurfaces $\N$ and $\N'$ that intersect at a spacelike 2-sphere. 
Let us then fix a solution, $(\M,g^{\mu\nu},\psi)$ to the Einstein--Dirac system with some suitable initial data.

The Dirac operator on curved spacetime is defined as
\begin{equation*}
    \Dslash \vc i \tilde{\gamma}^{\mu}\nabla^s_{\mu}= i \tilde{\gamma}^\mu(I\partial_\mu+\Gamma_\mu),
\end{equation*}
where $\nabla^s_{\mu}=(I\partial_\mu+\Gamma_\mu)$ is a spinorial covariant derivative, $\Gamma_\mu$ is the spinor affine connection and $I$ is the identity matrix of dimension 4. The curved gamma matrices $\tilde{\gamma}^\mu$ satisfy the relation
\begin{align*}
\{\tilde{\gamma}^\mu,\tilde{\gamma}^\nu\} = 2 g^{\mu\nu}.
\end{align*}
They are related to the flat ones $\gamma^a$ as follows:
\begin{equation*}
    \tilde{\gamma}^\mu = e^{\mu}_a\gamma^a,
\end{equation*}
where $e^{\mu}_a$ is a tetrad vector set, which provides a transformation to locally flat spacetime: $\eta_{ab}=e^{\mu}_ae^{\nu}_bg_{\mu\nu}$.

It is straightforward to show (see e.g. \cite{DiracCurvedBook}) that because $\psi$ satisfies the Dirac equation (\ref{Dirac}) the 4-vector defined as $j^\mu \vc \psi^\dagger \gamma^0 \tilde{\gamma}^\mu \psi$ 
is covariantly conserved
    \begin{align}
        \nabla_\mu^sj^\mu = 0.
    \end{align}
Because $j^{\mu}$ is a 4-vector and it has no spinor indices, the above equation  implies that $j^\mu$ satisfies the continuity equation:
\begin{equation}
    \nabla_{\mu}j^\mu=0.
    \label{eq:continuity}
\end{equation}

Now, let us pick a Cauchy temporal function $\T: \M \to \sR$, which defines a family of Cauchy hypersurfaces $\Sigma_t = \T^{-1}(t)$, for $t \in \sR$. We denote by $\psi(t,\cdot)$ the projection of a global spinor onto $\Sigma_t$. It determines a one-parameter family of measures 
\begin{align}\label{eq:evo}
d\mu_t(\bx)  = j^0(t,\bx) dS_t(\bx) = \psi^\dagger(t,\bx) \gamma^0 \tilde{\gamma}^\mu(t,\bx) \psi(t,\bx) dS_t(\bx),
\end{align}
where $dS_t$ is the volume element on $\Sigma_t$.

We will show that the evolution of measures \eqref{eq:evo} is causal in the sense of Definition \ref{def:CE}. To this end, we employ Theorem 6 from \cite{Miller21}, which says that an evolution of measures is causal if and only if it satisfies a suitable continuity equation with a subluminal velocity field. This is indeed the case for the family \eqref{eq:evo} induced by the Einstein--Dirac system on the strength of Eq. \eqref{eq:continuity}. 

We thus only need to show that $j^\mu$ is a causal vector field. To prove this let us assume that $j^\mu$ is spacelike at some point $p \in \M$. In such a case, one can find a local inertial frame, where $j'^0(p)=\psi'^\dagger(p)\gamma^0\tilde{\gamma}'^0(p)\psi'(p)=0$. The fact that $\gamma^0\neq 0$ and $\tilde{\gamma}'^0\neq 0$ leads to $\psi'^\dagger(p)\psi'(p)=0$ and consequently $\psi'(p)=0$. Since $\psi'(p)$ can be obtained by local Lorentz transformation of $\psi(p)$, this means that $\psi(p)=0$. But then $j^\mu(p)=0$, which is in contradiction to the initial assumption about the spacelike nature of $j^{\mu}(p)$. Hence, $j^\mu$ is indeed a causal vector field.
\end{proof}

		\section{Schr\"odinger--Newton wavepacket spreading}
		\label{sec:S-N}

        In the previous section we have shown that the fully relativistic Einstein--Dirac system always induces a causal evolution of probabilities. We now turn to its non-relativistic limit --- the Schr\"odinger--Newton system \eqref{SN}. As explained in the Introduction, one should not expect this non-relativistic equation to respect the relativistic causal structure. However, it makes sense to quantify the departure from the relativistic principles using the formalism from Sec. \ref{sec:causal_viol_quant}. 
        
        To this end, we first need to introduce suitable dimensionless units. Let us define the characteristic length $\ell_0$ and time $t_0$ scales. The relation between them is $t_0 = m\,\ell_0^2/\hbar$ and results from the requirement for kinetic and Hamiltonian terms to have the same dimension. This leaves us with one free parameter --- the dimensionless self-coupling constant $\kappa$, which is related to the physical parameters as
			\begin{align}
				\kappa \,=\, Gm^2 \frac{t_0}{\hbar^2 \,\ell_0}\,=\,G\frac{m^3 \ell_0}{\hbar^2}.
			\end{align}
 
The gravitational self-coupling is expected to play a role when $\kappa$ is on the order of one. By setting the mass  $m=Nm_u$ in atomic mass units (Daltons $1\,\mathrm{Da}=m_u$), 
the dimensioneless coupling constant reads $\kappa \approx 2\times 10^{-36} N$.
            Consequently, $\kappa\approx 1$ requires $N$ to be of order $10^{12}$. The current experimental limits \cite{Superpos2019} have reached $N \sim 2.5\cdot10^4$, which would give a negligible self-interaction $\kappa \sim 10^{-23}$ in the Schr\"odinger--Newton model.

        In our numerical analysis we focus on the $(1+1)$-dimensional analogue of Eq. \eqref{SN},
             \begin{equation}
			i \frac{\partial \psi(t,x)}{\partial t} = -\frac{1}{2} \frac{\partial^2\psi(t,x)}{\partial x^2}  - \kappa \int d y \frac{|\psi(t,y)|^2}{|x-y|}\, \psi(t,x).
			\label{eq:SN_dim}
		\end{equation}
        As we shall argue in the concluding Section \ref{sec:sum}, it grasps the essential features of the full $(3+1)$-dimensions yielding a faithful quantitative picture of superluminal signalling, while remaining numerically tractable.

        Our analysis consists in numerically solving Eq. \eqref{eq:SN_dim} for a given initial state $\psi(t=0,x)$, and computing the resulting probability density $\rho(t,x) \vc |\psi(t,x)|^2$. 
 For real- and imaginary-time propagation, split-step Crank--Nicolson discretisations remain a robust choice for trapped condensates with contact or dipolar interactions~\cite{William_H_Press2007-rp,Muruganandam2009,Kumar2015Dipolar}. Ground-state computations are efficiently performed using normalised gradient-flow methods---equivalent to an implicit Euler or Crank--Nicolson step combined with spectral or high-order finite-difference spatial discretisations---which offer strong stability and accuracy guarantees~\cite{BaoDu2004}. Time-dependent simulations employ semi-implicit Crank--Nicolson or high-order Runge--Kutta integrators within finite-difference or pseudospectral frameworks, as summarised in standard reviews of numerical methods for the nonlinear Schr\"odinger dynamics~\cite{AntoineBaoBesse2013}. For the nonlocal interaction term, typically expressed as a convolution, we use fast Fourier transform based solvers that provide high accuracy even for singular or anisotropic kernels~\cite{BaoJiangTangZhang2015,BaoTangZhang2016Dipolar}. These approaches are implemented in widely used simulation packages, such as GPELab~\cite{AntoineDuboscq2014GPELab}. Recently, nonlinear Schr\"odinger solver based on tensor networks has been proposed in \cite{boucomas2025quanticstensortrainsolving}. In this work we use split-step method both for the ground state preparation with imaginary time evolution, as well for real time evolution.

       The wave-function $\psi$ evolving according to the dynamical equation \eqref{eq:SN_dim} defines a family of measures $d\mu_t = \rho(t,x) \, dx$, for $t \geq 0$. To quantify the amount of superluminal signalling we employ the function \eqref{M}, which measures the time-dependent leakout of probability from the future light-cone of a given region of space. Concretely, we have
       	\begin{equation}
		M(t,R,\kappa) = \int_{-R}^{R} \rho(0,x)\, dx - \int_{-R-ct}^{R+ct} \rho(t,x)\, dx
        \label{M_int}
	\end{equation}
	and we set $c=1$. The one-dimensional region $K$ is simply an interval $[-R,R]$, which has a different interpretation depending on the choice of the initial state. We shall also keep the explicit dependence on the coupling $\kappa$, which quantifies the strength of the self-coupling. Function \eqref{M_int} depends also implicitly on the choice of the initial state $\psi(0,x)$ for Eq. \eqref{eq:SN_dim}. In the next subsections we consider two classes of initial states: a Gaussian and the ground state of Eq. \eqref{eq:SN_dim} with a trapping potential.

    It is also useful to calculate the maximal value of $M(t,R,\kappa)$ that occurs during the time evolution for fixed $\kappa$ and $R$, %. This value is described by relation:
        \begin{equation}
            \widetilde{M}(\kappa,R)= \max_{t \in [0,t_\text{max}]} M(t,R,\kappa).
            \label{M_max}
        \end{equation}
It will allow us to compare how the amount of superluminal signalling changes with $R$ and $\kappa$. Typically, we run the simulations for $t_\text{max} = 30$. 

        \subsection{Gaussian wave packet}\label{sec:Gauss}
 
        Let us consider an initial Gaussian wavepacket:
        \begin{equation}\label{Gauss}
        \psi(0,x)=\big(2\pi \sigma^2\big)^{-1/4}e^{-x^2/(4\sigma^2)}
        \end{equation}
        and fix the width $\sigma=1$ for concreteness. In fact, as noted in \cite{Manfredi2013}, for Eq. \eqref{eq:SN_dim} with a Gaussian initial wavepacket the rescaling of the width $\sigma \rightsquigarrow \lambda \sigma$ is equivalent to the rescaling of the coupling $\kappa \rightsquigarrow \lambda \kappa$.

        In Figs. \ref{fig:rho_gauss} and \ref{fig:rho_w_gauss} we present the results of the simulations for $\kappa=0.1$, $\kappa=1$ and $\kappa=3$. One can see immediately that for larger values of $\kappa$ the gravitational self-attraction balances the wavepacket spreading and the system tends to localise. This is consistent with the results for the original ($3+1$)-dimensional equation \eqref{SN} found in the literature \cite{Harrison2003,Salzman2005,Carlip2008,Giulini2011,Giulini2013,Meter2011,Manfredi2013}. The competition between the spreading induced by the free Hamiltonian term and the self-interaction induces a specific oscillation in the probability density, the frequency of which increases with $\kappa$.

        \begin{figure}[H]
		    \centering
		    \begin{subfigure}{0.32\textwidth}
		    \includegraphics[width=\textwidth]{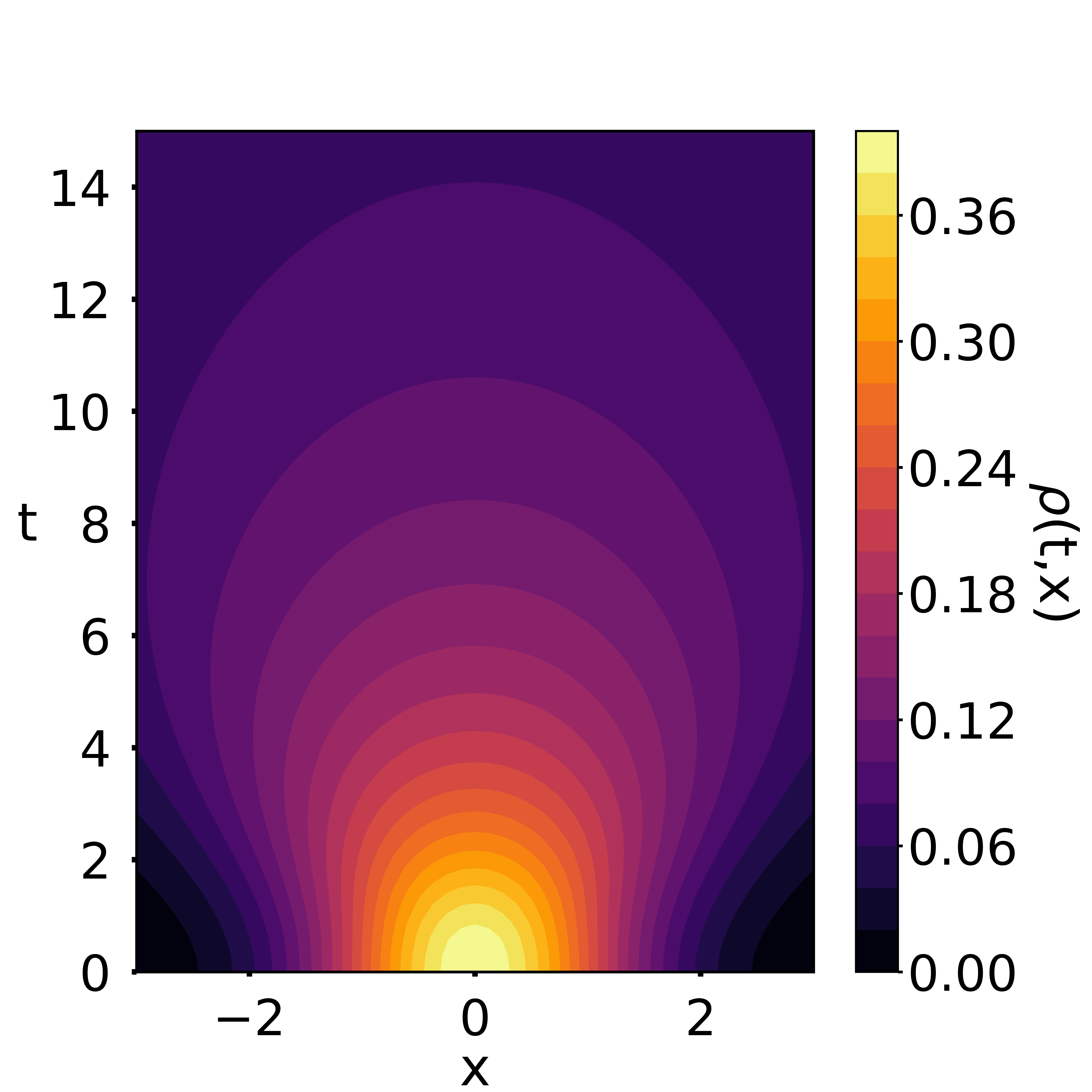}
		        \caption{$\kappa=0.1$}
		        \label{fig:rho_01_gauss}
		    \end{subfigure}
		    \hfill
		    \begin{subfigure}{0.32\textwidth}
		    \includegraphics[width=\textwidth]{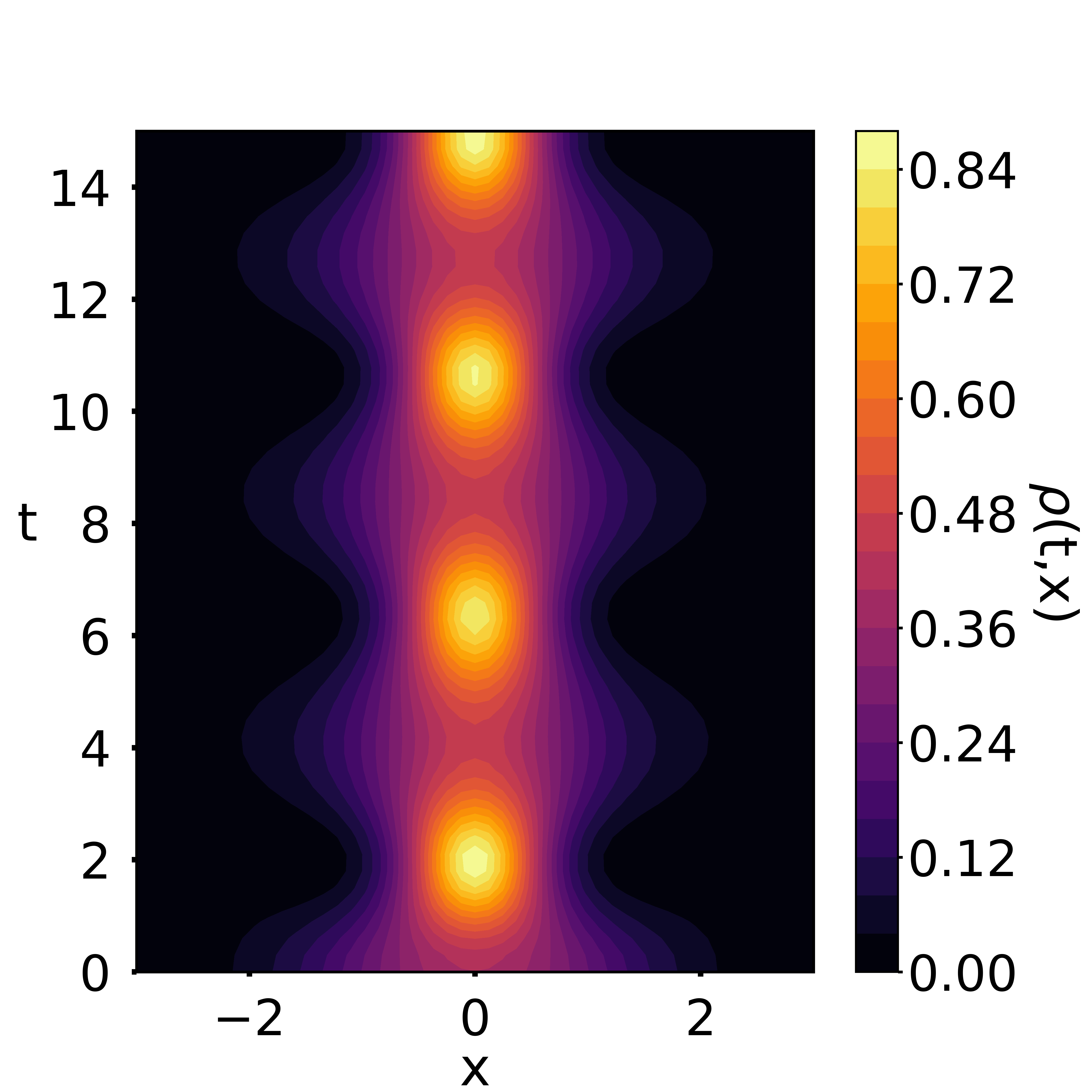}
		        \caption{$\kappa=1$}
		        \label{fig:rho_1_gauss}
		    \end{subfigure}
            \hfill
		    \begin{subfigure}{0.32\textwidth}
		      \includegraphics[width=\textwidth]{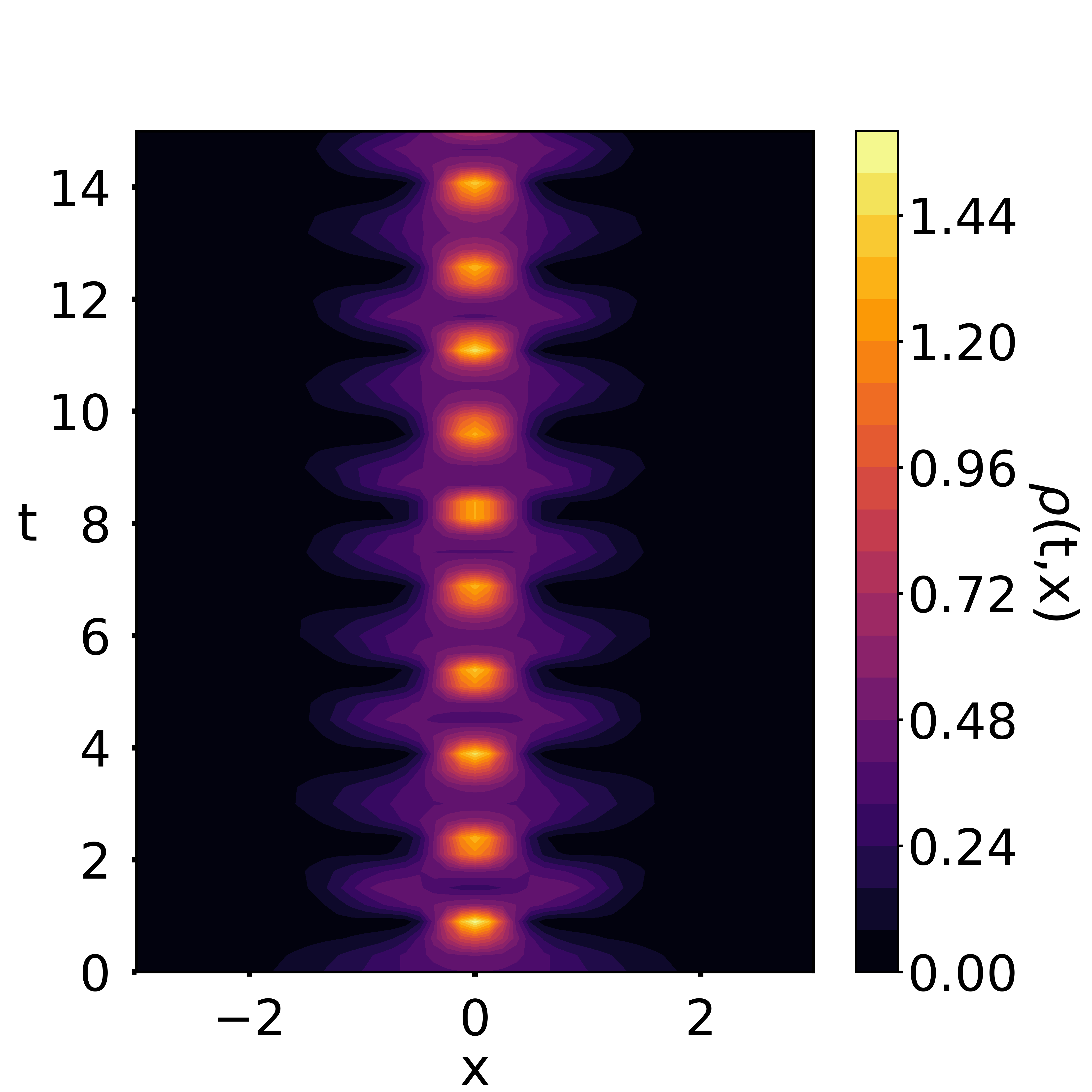}
		        \caption{$\kappa=3$}
		        \label{rho_3_gauss}
		    \end{subfigure}
		    \caption{Time evolution of probability density $\rho (t,x)$ for the Gaussian initial state.}
		    \label{fig:rho_gauss}
		\end{figure}

        There is a critical value of the coupling, $\kappa_c$, for which the self-interaction takes over and the packet tends to self-localise. In Fig. \ref{fig:kappa_c} we show the wavepacket spread,
        \begin{align}\label{spread}
            \Delta_x (t,\kappa) = \sqrt{ \int dx \, x^2\, \vert\psi(t,x)\vert^2 - \left( \int dx \, x\, \vert \psi(t,x) \vert^2 \right)^2 },
        \end{align}
        as a function of time and $\kappa$. It shows that $\kappa_c \approx 0.4$ marks the boundary between the spreading and localisation. 

        \begin{figure}[H]
		    \centering
		    \includegraphics[width=0.4\textwidth]{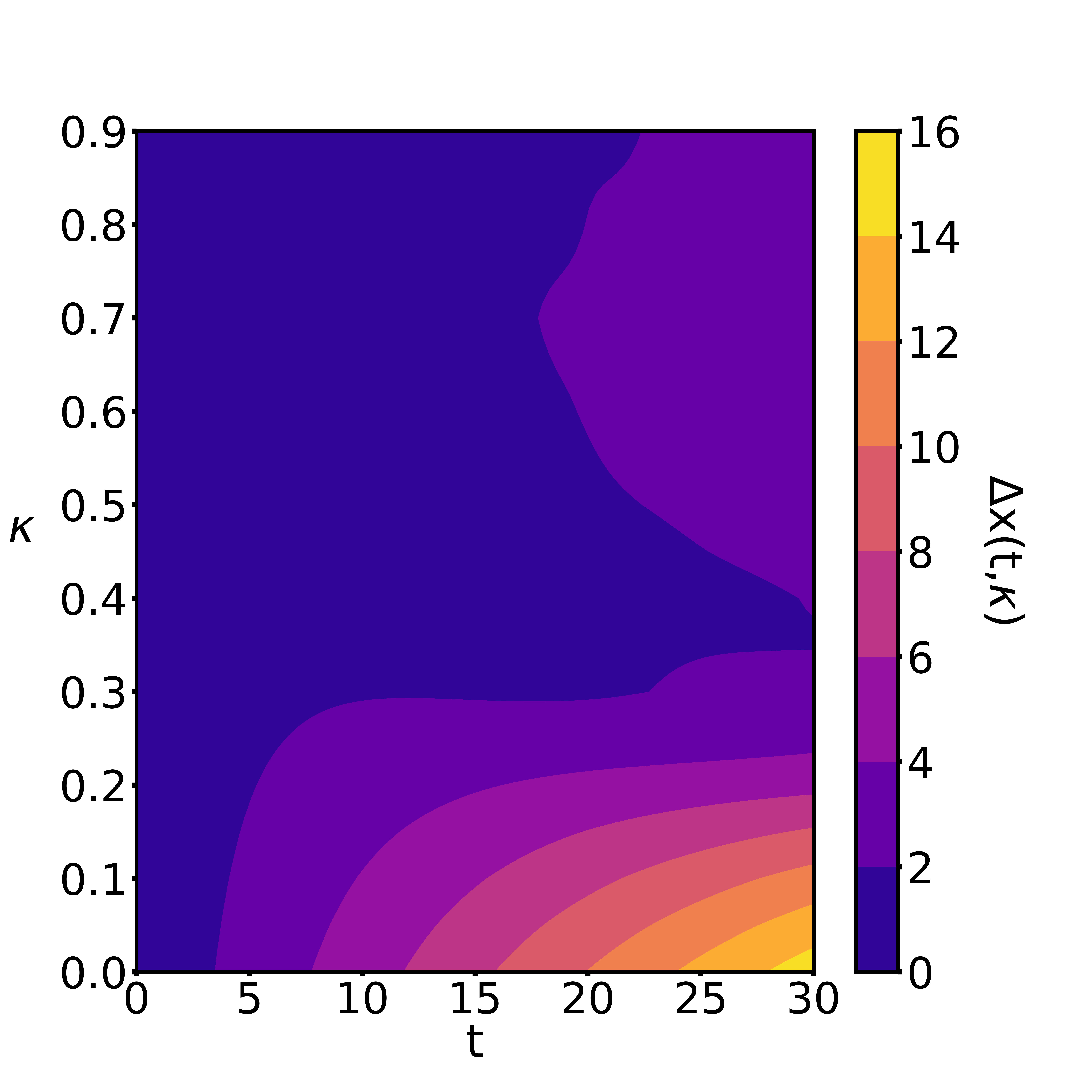}
		    \caption{The wavepacket spread \eqref{spread} for the initial Gaussian state in function of time and the self-coupling strength $\kappa$. For $\kappa_c \gtrsim 0.4$ the wavepacket tends to localise.}
		    \label{fig:kappa_c}
		\end{figure}

       A different insight into the Schr\"odinger--Newton dynamics is offered if we plot the evolution of probability density in logarithmic scale. Fig. \ref{fig:rho_w_gauss} provides a wider perspective on the density plots from Fig. \ref{fig:rho_gauss}. It unveils a significant amount of probability density escaping the light cone. Observe that for $\kappa = 1$ there seems to be less probability beyond the light cone than for $\kappa=0.1$, yet also less than for $\kappa=3$.

        \begin{figure}[H]
		    \centering
		    \begin{subfigure}{0.32\textwidth}
		    \includegraphics[width=\textwidth]{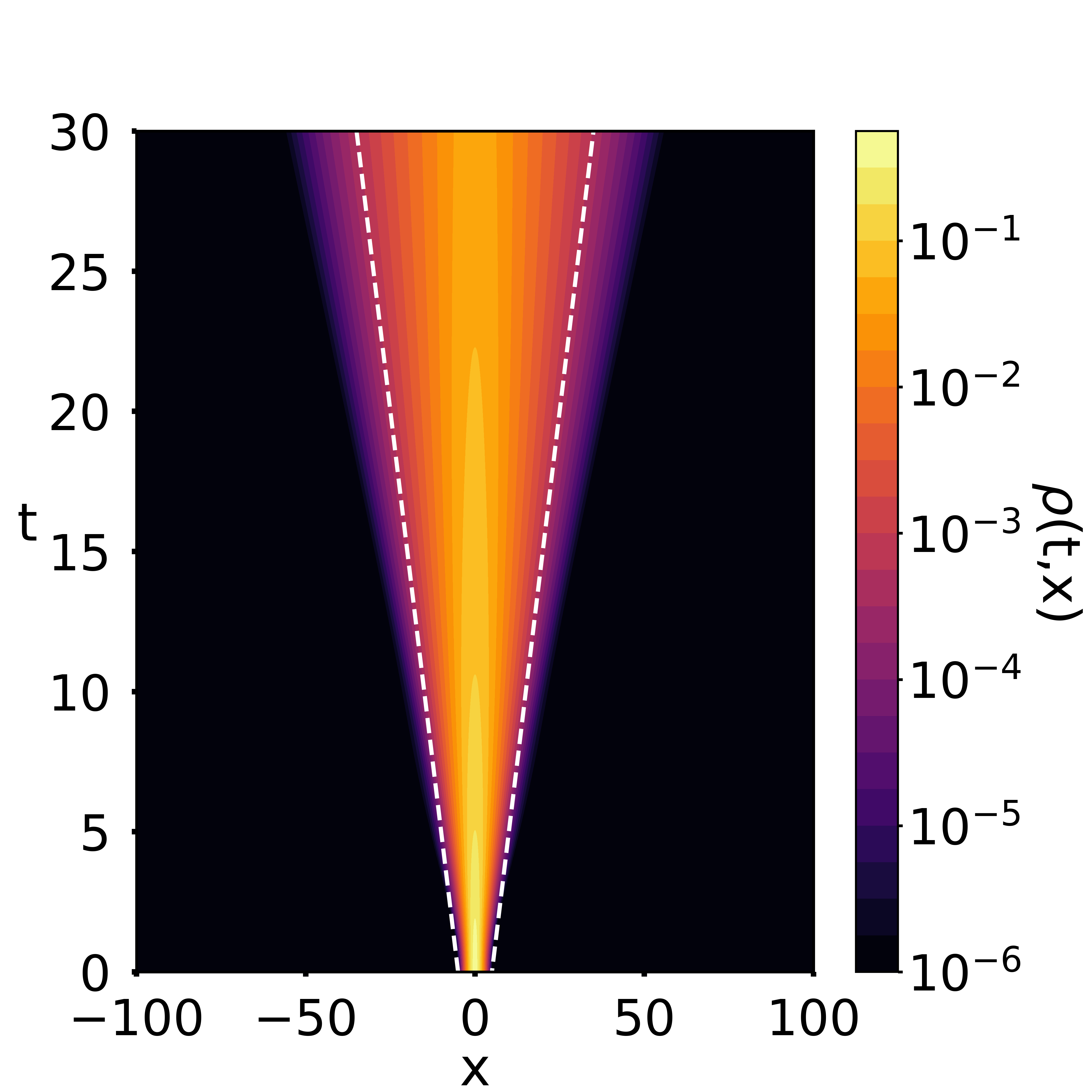}
		        \caption{$\kappa=0.1$}
		        \label{fig:rho_w_0_gauss}
		    \end{subfigure}
		    \hfill
		    \begin{subfigure}{0.32\textwidth}
		    \includegraphics[width=\textwidth]{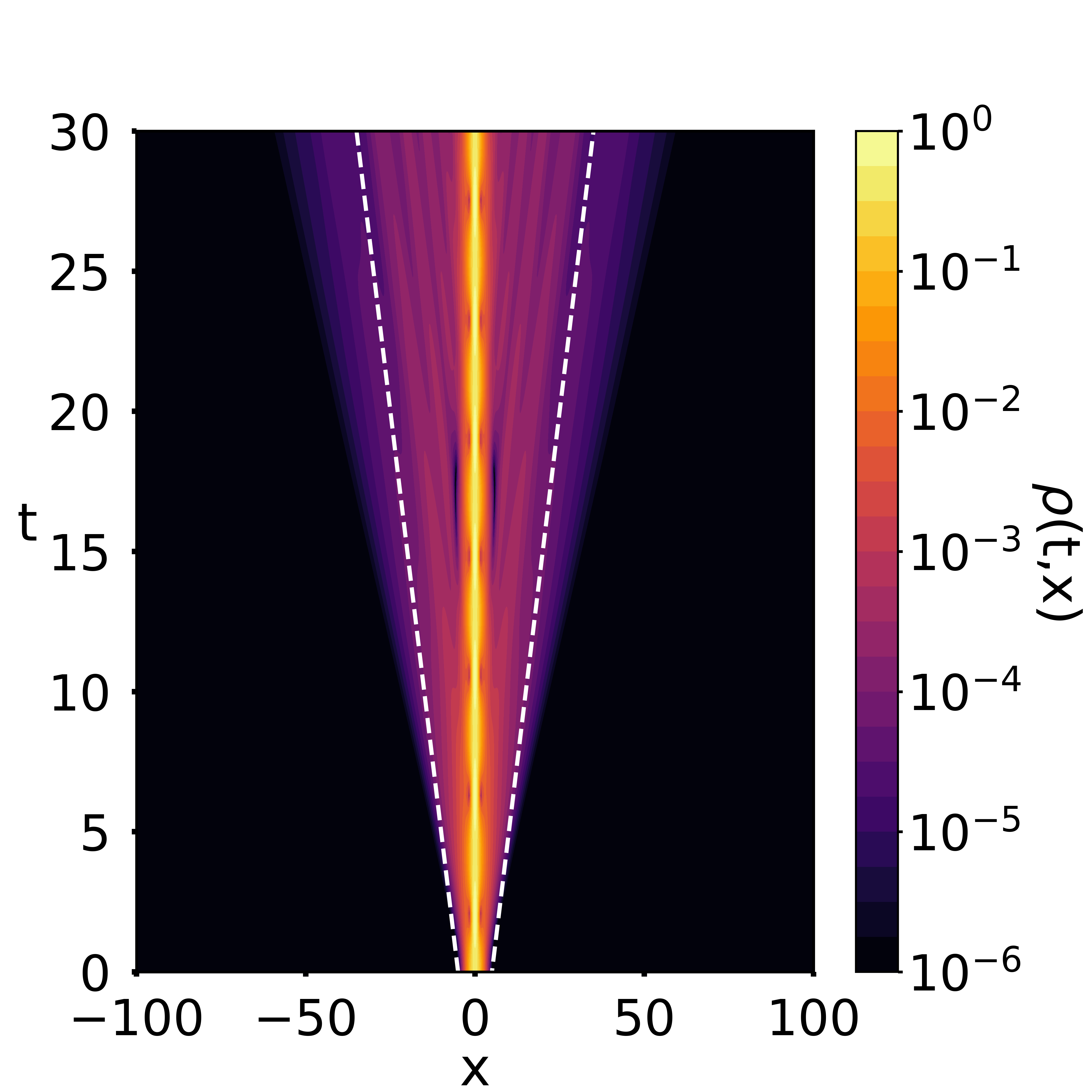}
		        \caption{$\kappa=1$}
		        \label{fig:rho_w_1_gauss}
		    \end{subfigure}
            \hfill
		    \begin{subfigure}{0.32\textwidth}
		      \includegraphics[width=\textwidth]{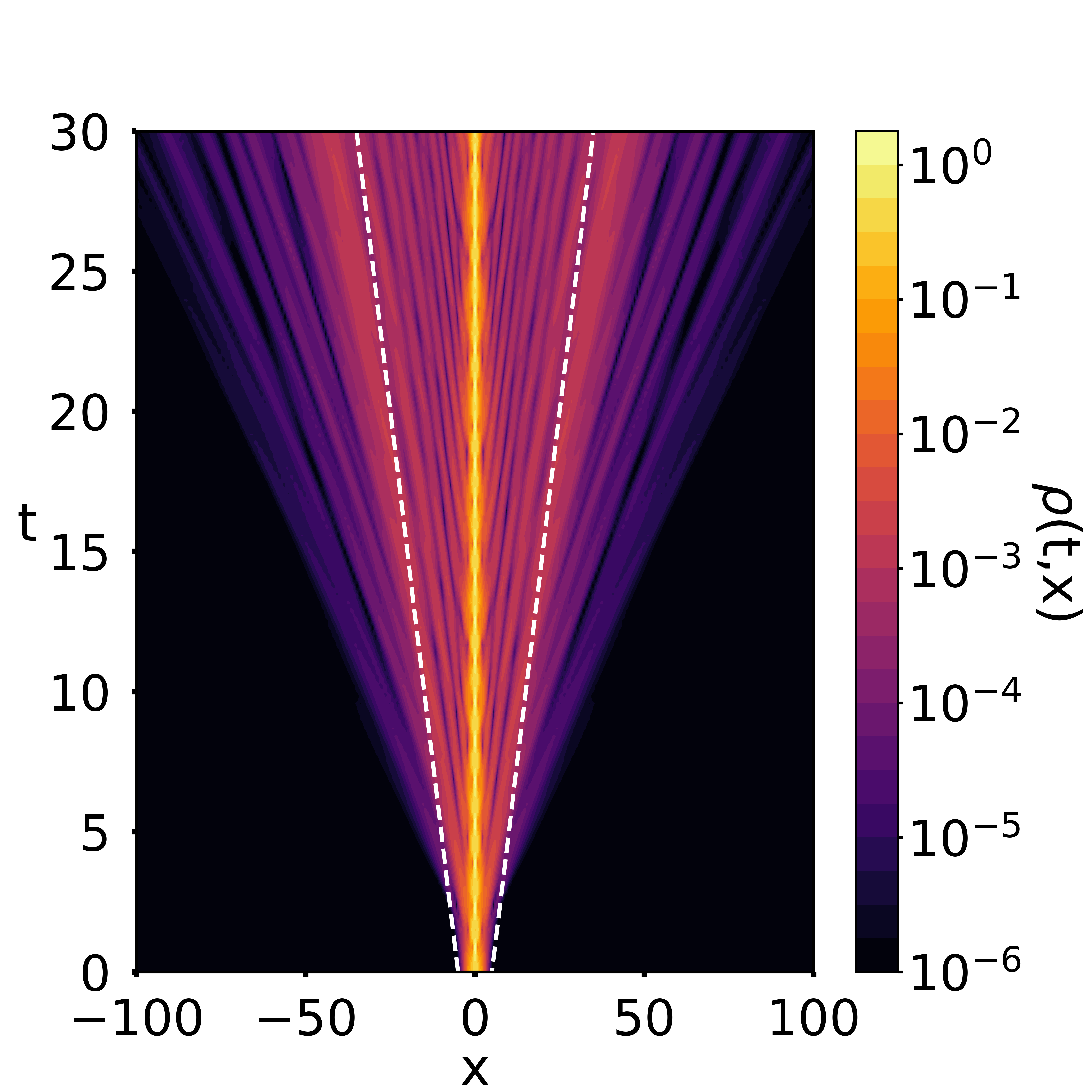}
		        \caption{$\kappa=3$}
		        \label{fig:rho_w_3_gauss}
		    \end{subfigure}
		    \caption{Broader perspective on time evolution of probability density $\rho (t,x)$ for Gaussian initial state. Dashed white lines depict the boundary of the light cone starting starting at $x \in [-5,5]$ and $t=0$.}
		    \label{fig:rho_w_gauss}
		\end{figure}

        During the evolution, the initial Gaussian wavepacket develops a sharply localised core around
its centre together with extended tails at large $|x|$: most of the probability remains in the
central, self-localised part, while a small fraction propagates to infinity.
This effect was observed in the literature on self-gravitating quantum systems in $(3+1)$-dimensions \cite{Harrison2001,Harrison2003,Giulini2011,Meter2011}, but have not been studied quantitatively, to our best knowledge. 
Our simulations show that for larger values of the self-coupling $\kappa$ the amount of probability spreading superluminally \emph{increases} with $\kappa$. 
In order to get a quantitative estimate of the superluminal probability leak we employ the function $M(t,R,\kappa)$ defined in Eq. \eqref{M_int}. The light cones depicted in Fig. \ref{fig:rho_w_gauss} correspond to the choice $R = 5$.  Initially, the probability to find the particle beyond the interval $[-R,R]$ is very small $\approx 5 \cdot 10^{-7}$. It increases monotonically in time, as the tails start to build up beyond the light cone. 
        
        In Fig. \ref{fig:M_gauss} we present the plots of the quantity $M(t,R,\kappa)$ for $\kappa=0$, $\kappa=0.5$, $\kappa=1$ and $\kappa=3$. For $\kappa = 0$ we recover the free Schr\"odinger equation without gravitational self-interaction. Fig. \ref{fig:M_gauss_0} exhibits the same pattern as Fig. 2 in \cite{PRA2017}\footnote{In \cite{PRA2017} the width of the initial Gaussian corresponds to the choice $\sigma = 1/\sqrt{2}$ in Eq. \eqref{Gauss}.} showing that the free nonrelativistic Schr\"odinger equation facilates superluminal signalling \cite{PRA2020}. The maximal amount of superluminal probability leakage is observed for $R \gtrsim 3 \sigma$. For smaller values of $R$ the initial probability to find the particle outside of the interval $[-R,R]$ is substantial and does not get significantly affected by the superluminal spreading. For $R \gg 3 \sigma$ the section of the light cone at time-slice $t$ is large and the outside probability is small. This suggests that while the spreading is superluminal, it has a finite velocity.
        
        Figs. \ref{fig_M_gauss_05} and \ref{fig_M_gauss_1} show that the gravitational self-interaction mitigates the superluminal spreading of the Gaussian wavepacket. This is coherent with the fact that the wavepackets evolving under the Schr\"odinger--Newton equation with a coupling $\kappa$ above $\kappa_c$ tend to localise and hence, naturally, more of the probability density stays within the light cone. 
        However, Fig. \ref{fig_M_gauss_3} shows that this picture breaks down for large $\kappa$. Indeed, the amount of superluminal probability leakage for $\kappa = 3$ is actually (slightly) larger than in the non-interacting case. A comparison between Figs. \ref{fig:rho_w_0_gauss}, \ref{fig:rho_w_1_gauss} and \ref{fig:rho_w_3_gauss} shows that the velocity of superluminal spreading increases with the increasing coupling constant $\kappa$.
        
       The attractive self-interaction compresses the initial Gaussian wavepacket into a small region around its centre, with the localisation length shrinking as $\kappa$ increases. This enhanced localisation broadens the momentum distribution and excites high-momentum modes, which in turn eject part of the wavepacket from the localised region. The resulting escape velocity is finite, yet superluminal, and increases with $\kappa$.

                \begin{figure}[H]
            \centering
            \begin{subfigure}{0.4\textwidth}
            \includegraphics[width=\textwidth]{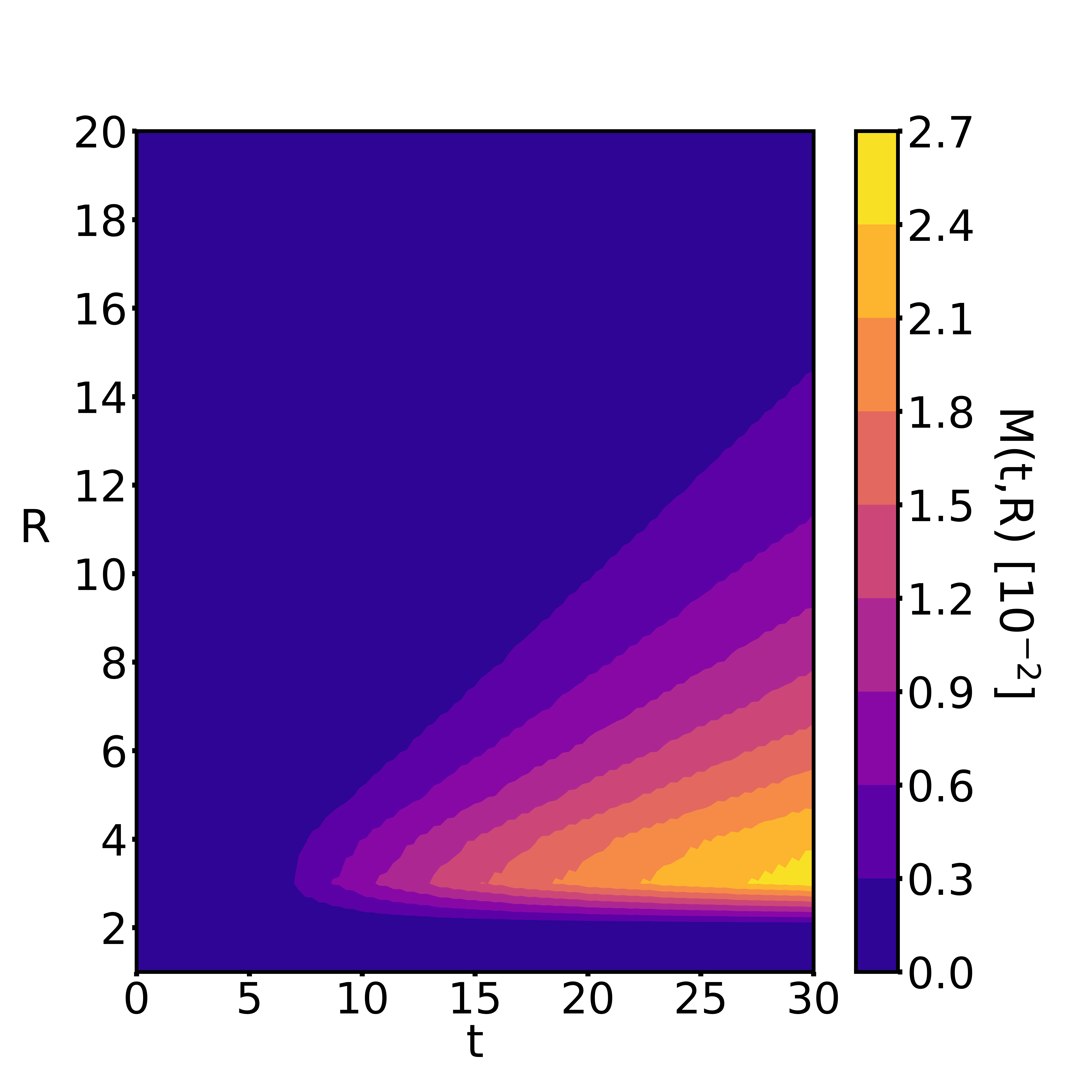}
            \caption{$\kappa = 0$}
            \label{fig:M_gauss_0}
             \end{subfigure}
             \hfill
            \begin{subfigure}{0.4\textwidth}
                \includegraphics[width=\textwidth]{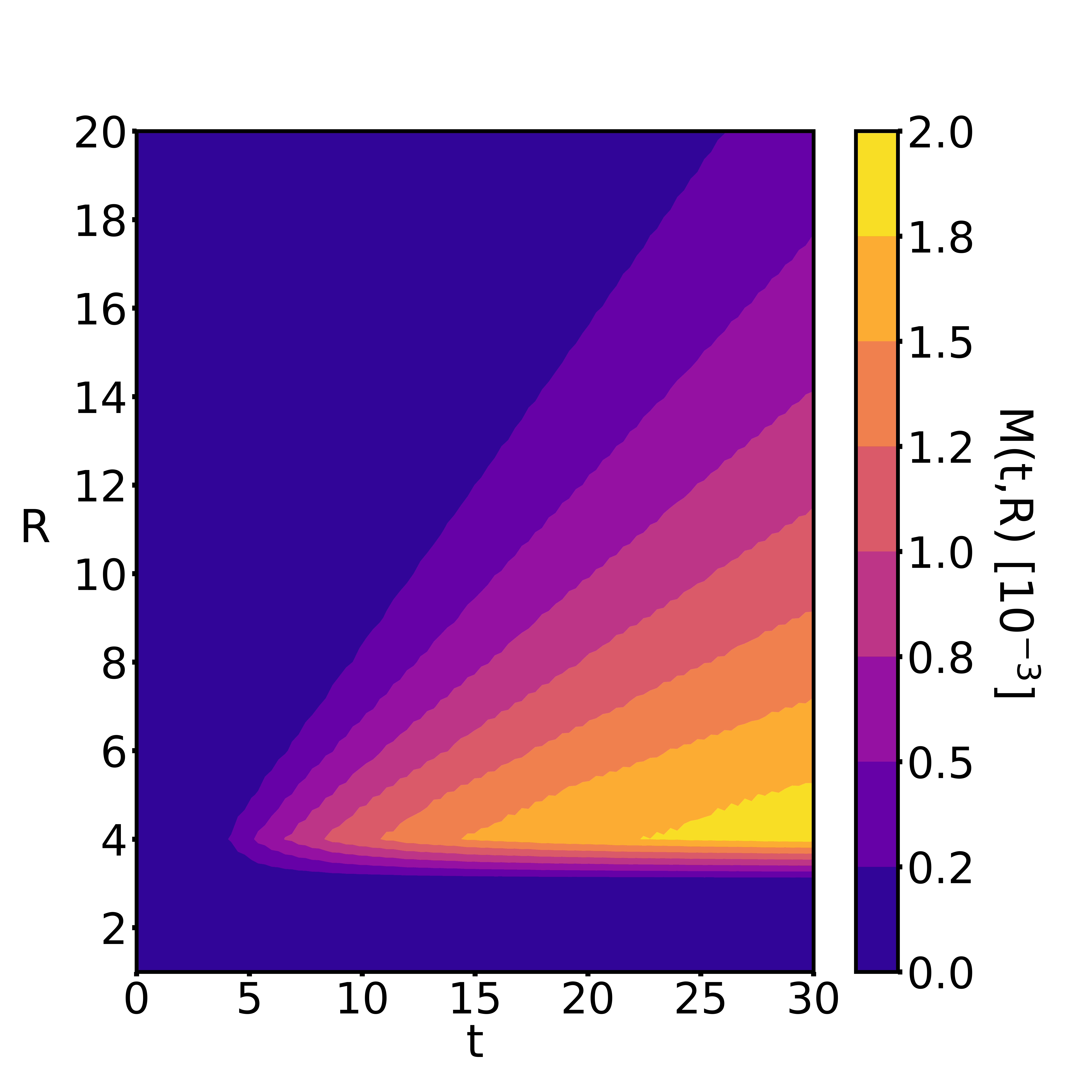}
                \caption{$\kappa=0.5$}
                \label{fig_M_gauss_05}
            \end{subfigure}
            \hfill
            \begin{subfigure}{0.4\textwidth}
                \includegraphics[width=\textwidth]{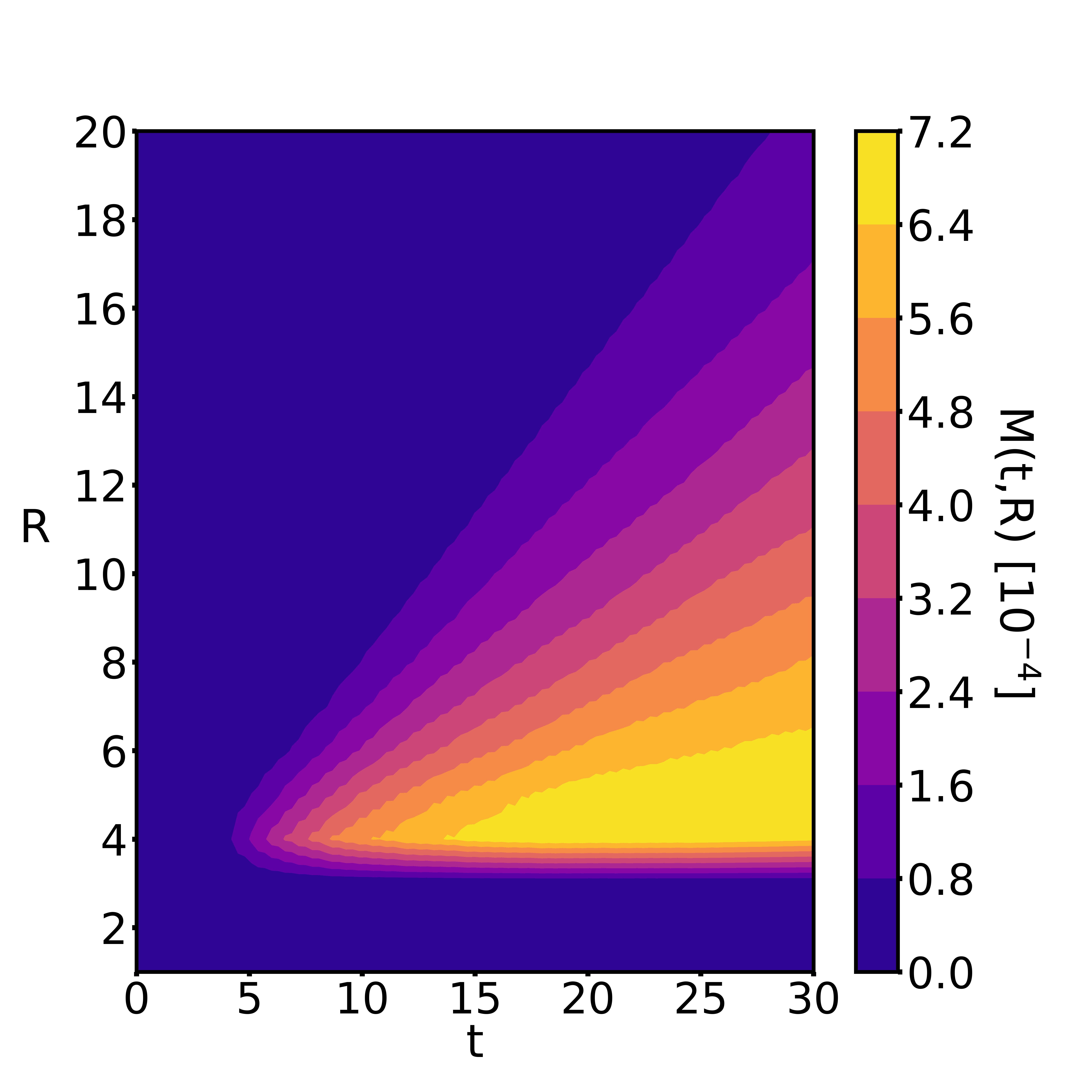}
                \caption{$\kappa=1.0$}
                \label{fig_M_gauss_1}
            \end{subfigure}
            \hfill
            \begin{subfigure}{0.4\textwidth}
                \includegraphics[width=\textwidth]{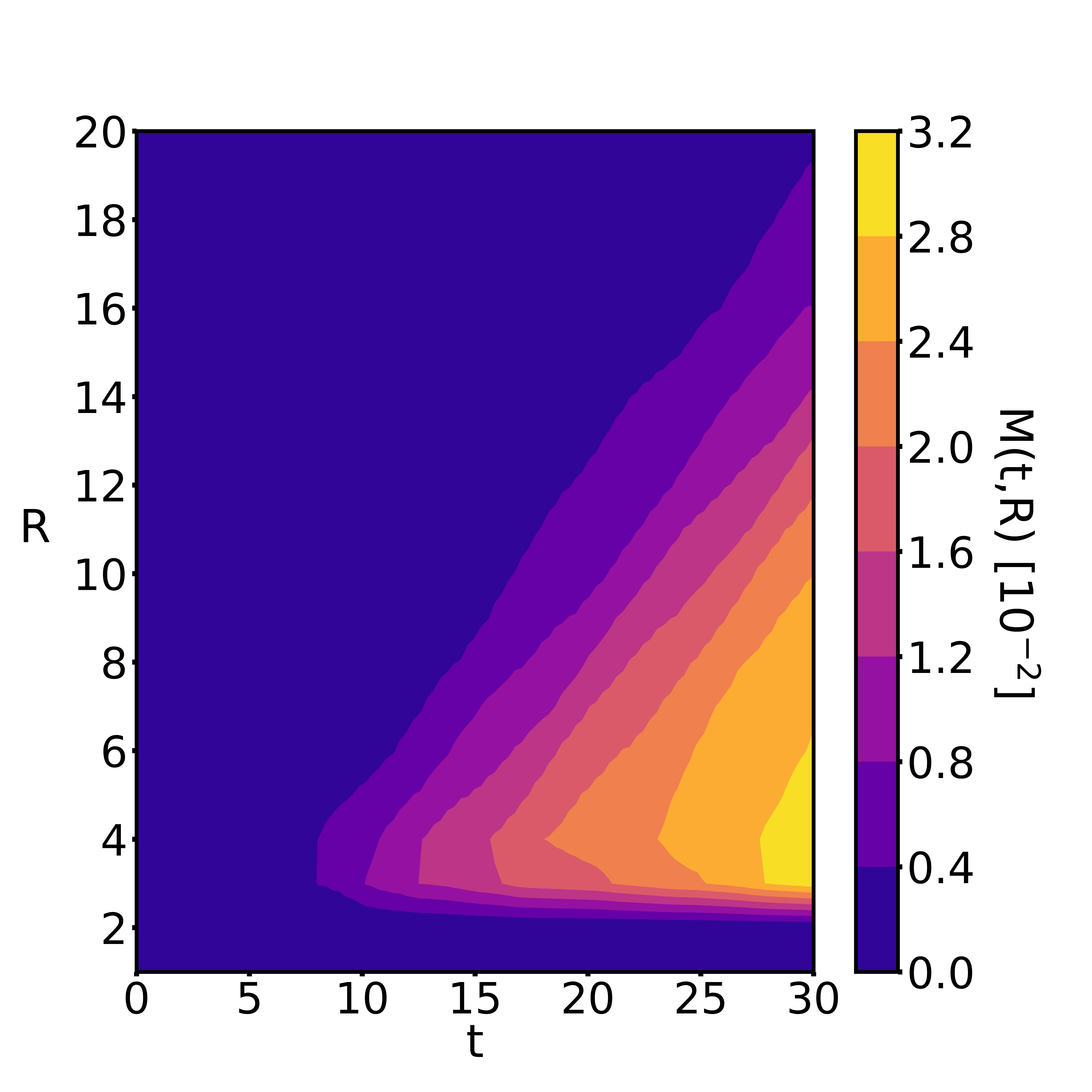}
                \caption{$\kappa=3.0$}
                \label{fig_M_gauss_3}
            \end{subfigure}
            \caption{The plots present the amount of probability density leaking out of the future light cone of the interval $[-R,R]$ at a given time $t$, for a Gaussian initial state evolving under the Schr\"odinger--Newton equation the coupling constant $\kappa$.}
            \label{fig:M_gauss}
        \end{figure}

        \subsection{The trapped particle}\label{sec:trap}

        We now discuss the protocol for operational superluminal signalling described in Sec. \ref{sec:protocol}. To this end, we consider the one-dimensional Schr\"odinger--Newton equation \eqref{SN} with an additional potential: %$V(t,x)$:
		\begin{equation}
		    V(t,x) = 
            \begin{cases}
		        -V_0\, \Theta(-t) & \text{for} \ \ x\in [-R,R], \\
                0 & \text{for} \ \  x\in (-\infty,-R)\cup(R,\infty) ,
		    \end{cases}
            \label{V}
		\end{equation}
		where $\Theta$ is the Heaviside theta function. It is a box potential of depth $V_0$ and width $2R$, which is switched off at $t=0$. Hence, the potential term only determines the shape of the initial wave packet at $t=0$ and we can omit it in the dynamical equations.
        
       For a chosen value of $R$, and self-interaction coupling $\kappa$ we prepare the system in its ground state. The probability density of this state is presented in Fig. \ref{fig:rho_gs_INI}. For a finite depth of the potential well the state will have some tails outside of the trapping region. This is not an issue for our study, because the operational measure of superluminal signalling \eqref{M} quantifies the difference between the detection probabilities with and without the trap (recall Sec. \ref{sec:protocol}). For the numerical simulations we have set $V_0=20$.

      Two parameters of the system, i.e. $R$ and $\kappa$ determines the properties of the ground state. In particular, strong enough self-interaction $\kappa$ results in a self-bound ground state, remaining stable after releasing the trap. The quantitative picture is given by the phase diagram in the $\kappa - R$ plane of  the ground state interaction to kinetic energy ratio, $\delta = \frac{E_{\rm int}}{E_{\rm kin}}$, Fig. \ref{fig:E_ratio_SN}, where 
         \begin{align}\label{energy}
        E_{\text{kin}} = \frac{1}{2} \int dx \big\vert \partial_x \psi_\text{GS}(x) \big\vert^2, &&   E_{\text{int}} = - \frac{\kappa}{2} \int dxdy\,  \frac{\vert \psi_\text{GS}(y) \vert^2 \vert \psi_\text{GS}(x) \vert^2}{\vert x- y \vert}.   
         \end{align}
      For $|\delta|>1$, the interaction energy dominates kinetic energy, and state is in the self-bound regime.

        \begin{figure}[H]
		    \centering
		    \begin{subfigure}[t]{0.5\textwidth}
		        \includegraphics[width=\textwidth]{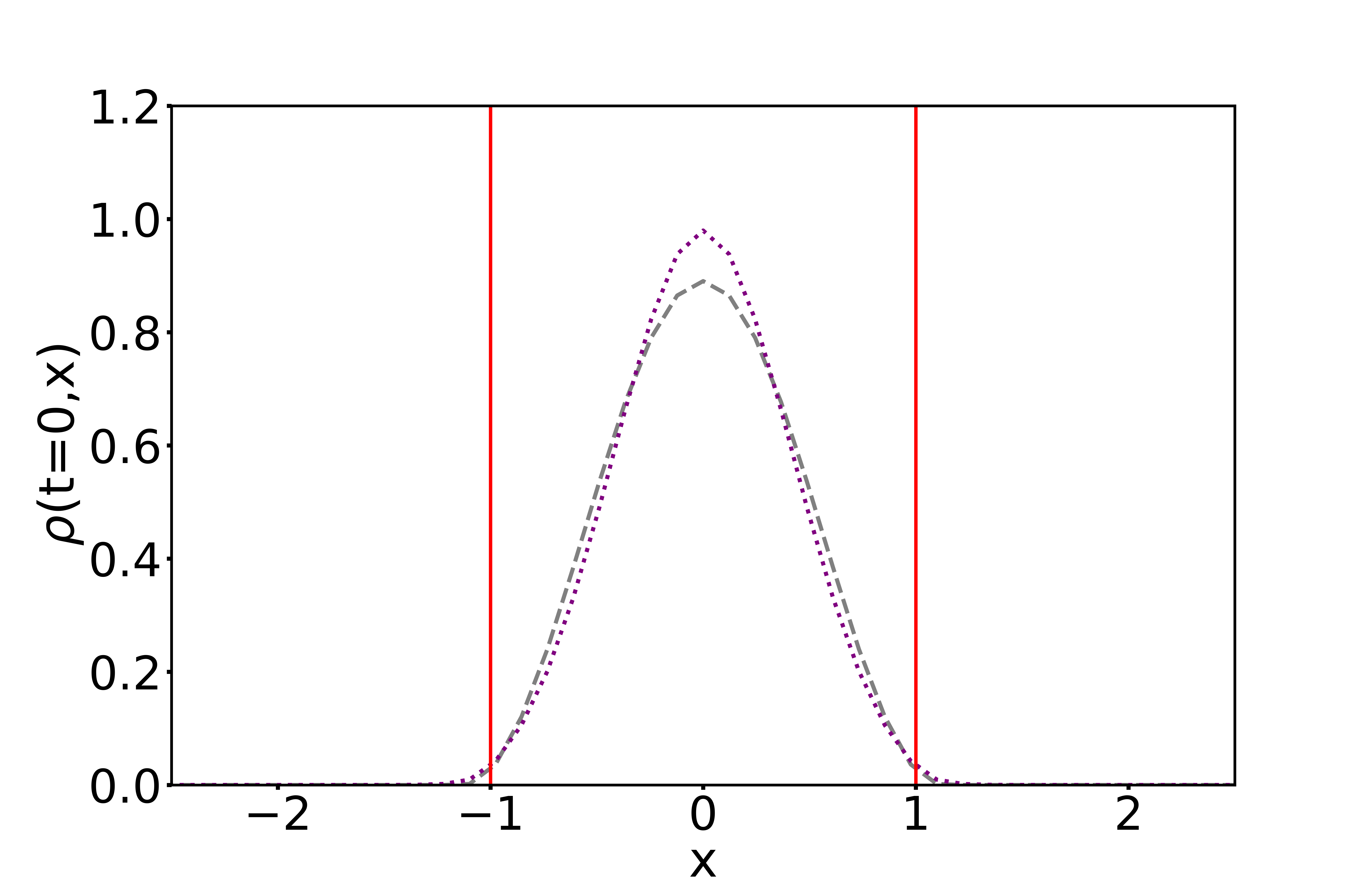}
		        \caption{
                Dotted line shows the probability density of initial ground states of Eq. \eqref{eq:SN_dim} with the box potential \eqref{V}, with $R=1$, $V_0=-20$ for $\kappa = 0$ (dashed line) and for $\kappa = 1$ (dotted line).
                The solid lines mark the boundary of the  trap.}
		        \label{fig:rho_gs_INI}
		    \end{subfigure}
            \hfill
            \begin{subfigure}[t]{0.43\textwidth}
		    \includegraphics[width=\textwidth]{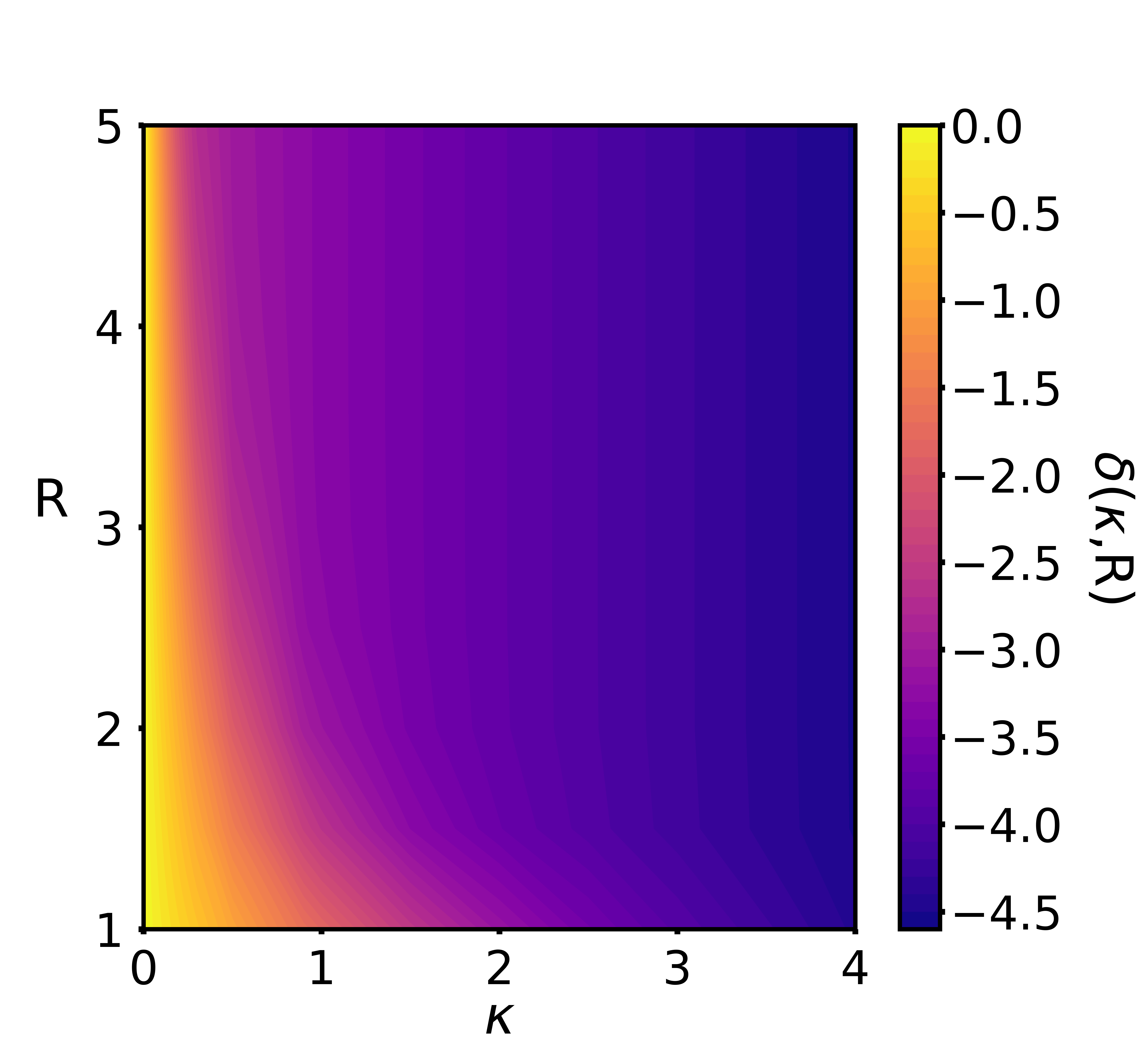}
		        \caption{The ratio of the interaction energy to kinetic energy \eqref{energy} of the initial ground state in the trap as a function of the coupling constant $\kappa$ and the trap size $R$.}
		        \label{fig:E_ratio_SN}
		    \end{subfigure}
            \caption{The structure of the ground state of Eq. \eqref{eq:SN_dim} with the box potential \eqref{V}.}
		\end{figure}

        The ground state is then released from the trap at $t=0$ and evolves according to Eq. \eqref{eq:SN_dim}. In Fig. \ref{fig:rho_gs_snap} we compare the time evolution of the initial state with $(\kappa = 1)$ and without $(\kappa = 0)$ the gravitational self-interaction. It is evident that the self-coupling leads to wavepacket localisation within the region of the initial trap. Fig. \ref{fig:rho_gs_comp} also shows that the amount of probability, which leaks out of the causal cone for $\kappa = 1$ is roughly more than an order of magnitude smaller than in the free case.

         \begin{figure}[H]
		    \centering
		    \begin{subfigure}[t]{0.48\textwidth}
		        \includegraphics[width=\textwidth]{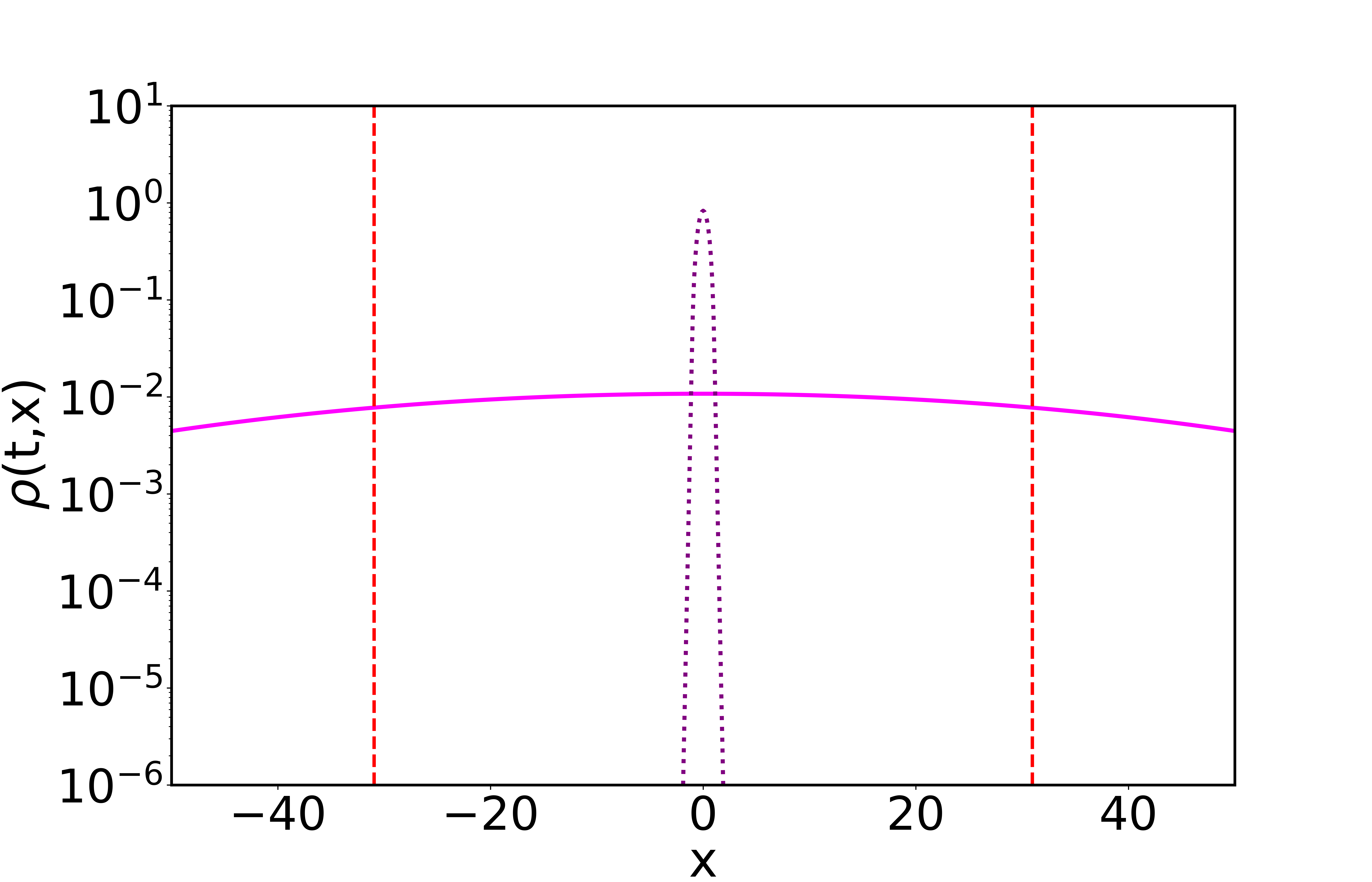}
		        \caption{Solid line presents the time-evolved probability density at $t=30$ of the initial ground state for $\kappa=0$. The dotted line shows the initial probability density.}
		        \label{fig:rho_gs_k0}
		    \end{subfigure}
		    \hfill
		    \begin{subfigure}[t]{0.48\textwidth}
		        \includegraphics[width=\textwidth]{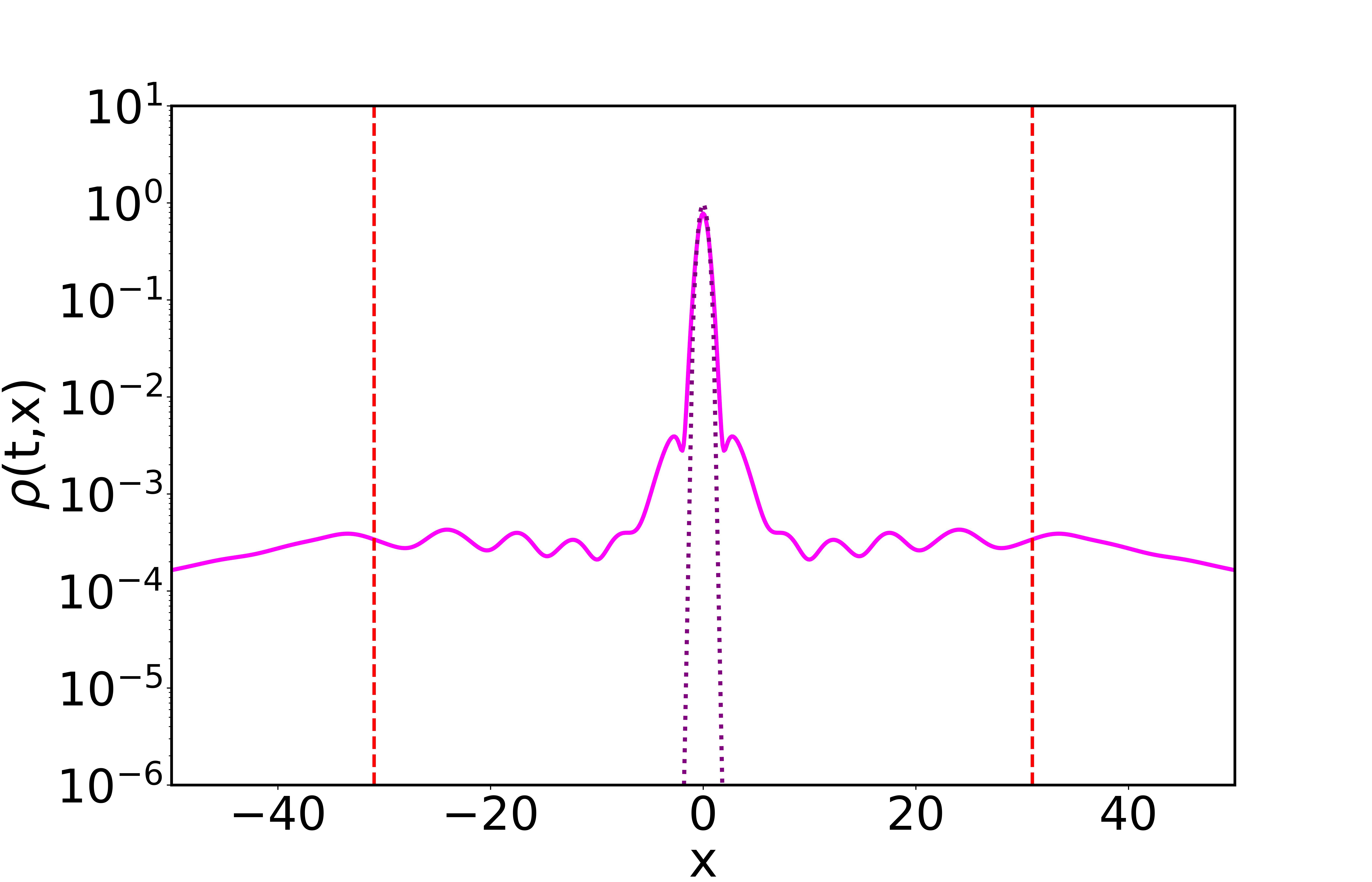}
		        \caption{Solid line presents the time-evolved probability density at $t=30$ of the initial ground state for $\kappa=1$. The dotted line shows the initial probability density.}
		        \label{fig:rho_gs_comp}
		    \end{subfigure}

		    \caption{Comparison of time evolved probability densities resulting from Eq. \eqref{eq:SN_dim} with the box potential \eqref{V}. Dashed lines mark the boundary of the light cone of the initial trapping region.}
		    \label{fig:rho_gs_snap}
		\end{figure}

         Figs. \ref{fig:rho_gs} and \ref{fig:rho_w_gs} present the time-evolution of the probability density induced by Eq. \eqref{eq:SN_dim} with the initial state taken to be the ground state in the box potential. Fig.~\ref{fig:rho_gs}  demonstrates that the wave packet tends to localise as the coupling constant $\kappa$ increases. In contrast to the Gaussian case, there is no superluminal outflow of probability density for large $\kappa$. This is because, for large $\kappa$ the ground states is self-stabilised, even when the trap is switched off. This is particularly visible in Figs. \ref{fig:rho_w_1_gs} and~\ref{fig:rho_w_3_gs}.

        \begin{figure}[H]
		    \centering
		    \begin{subfigure}{0.32\textwidth}
		    \includegraphics[width=\textwidth]{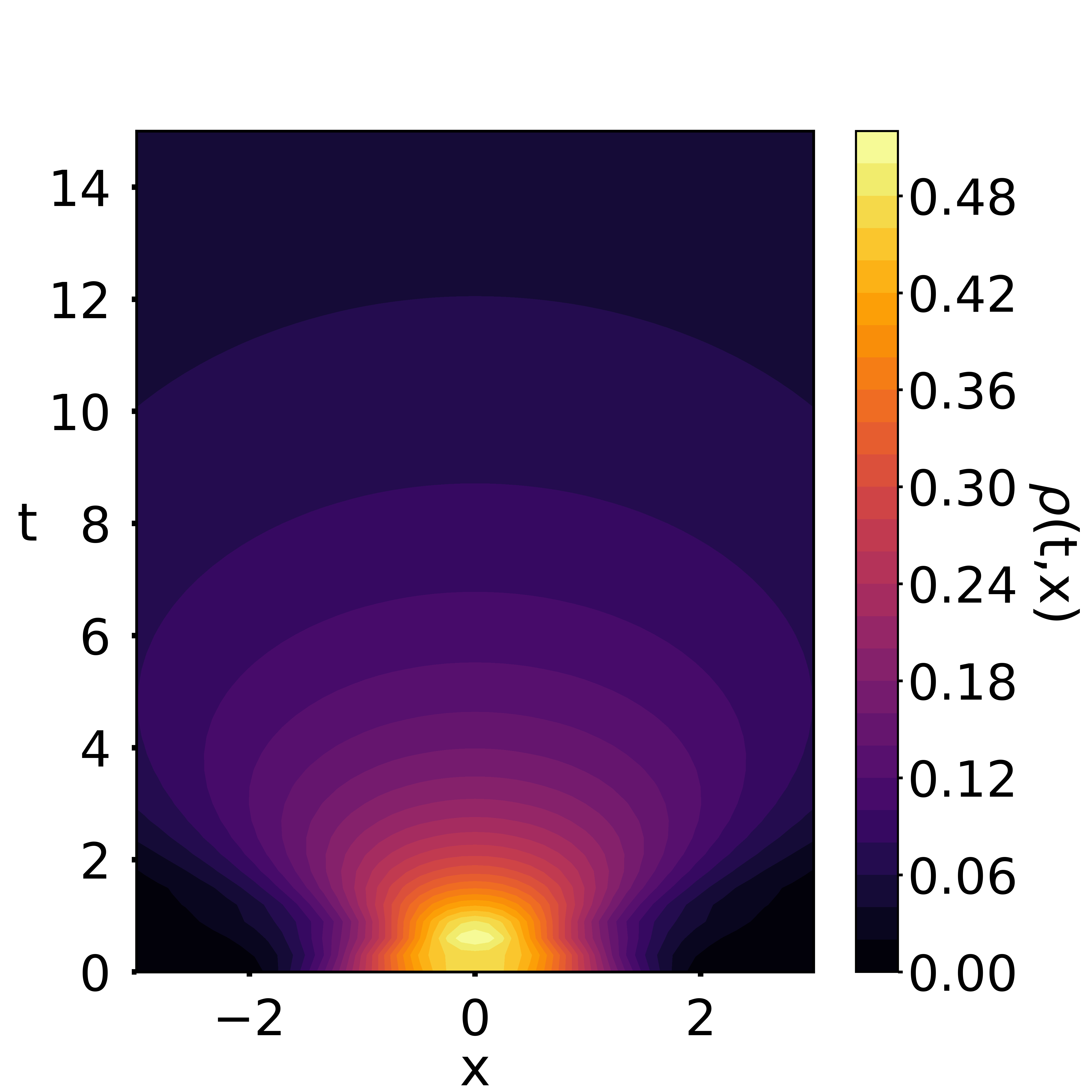}
		        \caption{$\kappa=0.1$}
		        \label{fig:rho_01_gs}
		    \end{subfigure}
		    \hfill
		    \begin{subfigure}{0.32\textwidth}
		      \includegraphics[width=\textwidth]{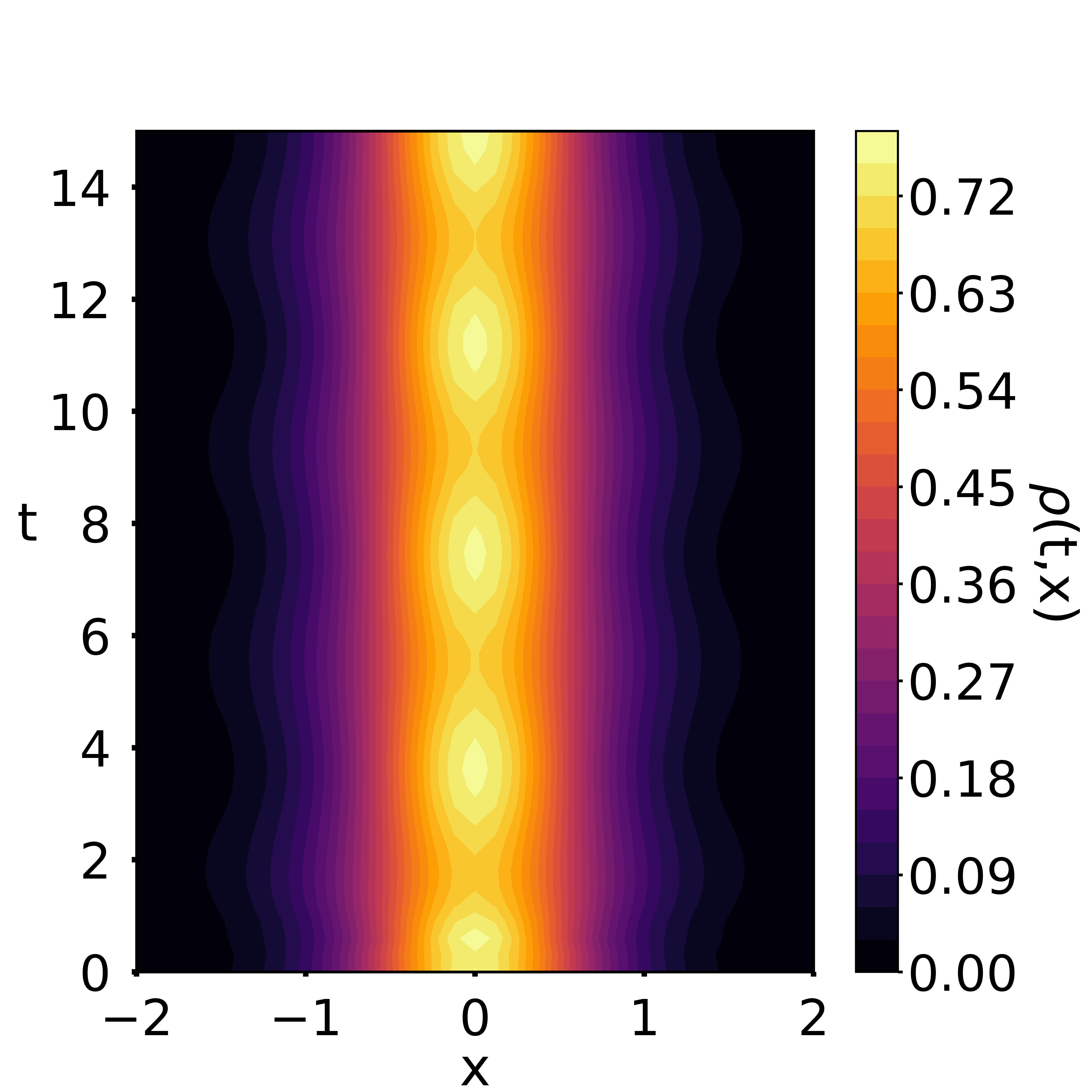}
		        \caption{$\kappa=1$}
		        \label{fig:rho_1_gs}
		    \end{subfigure}
            \hfill
		    \begin{subfigure}{0.32\textwidth}
		    \includegraphics[width=\textwidth]{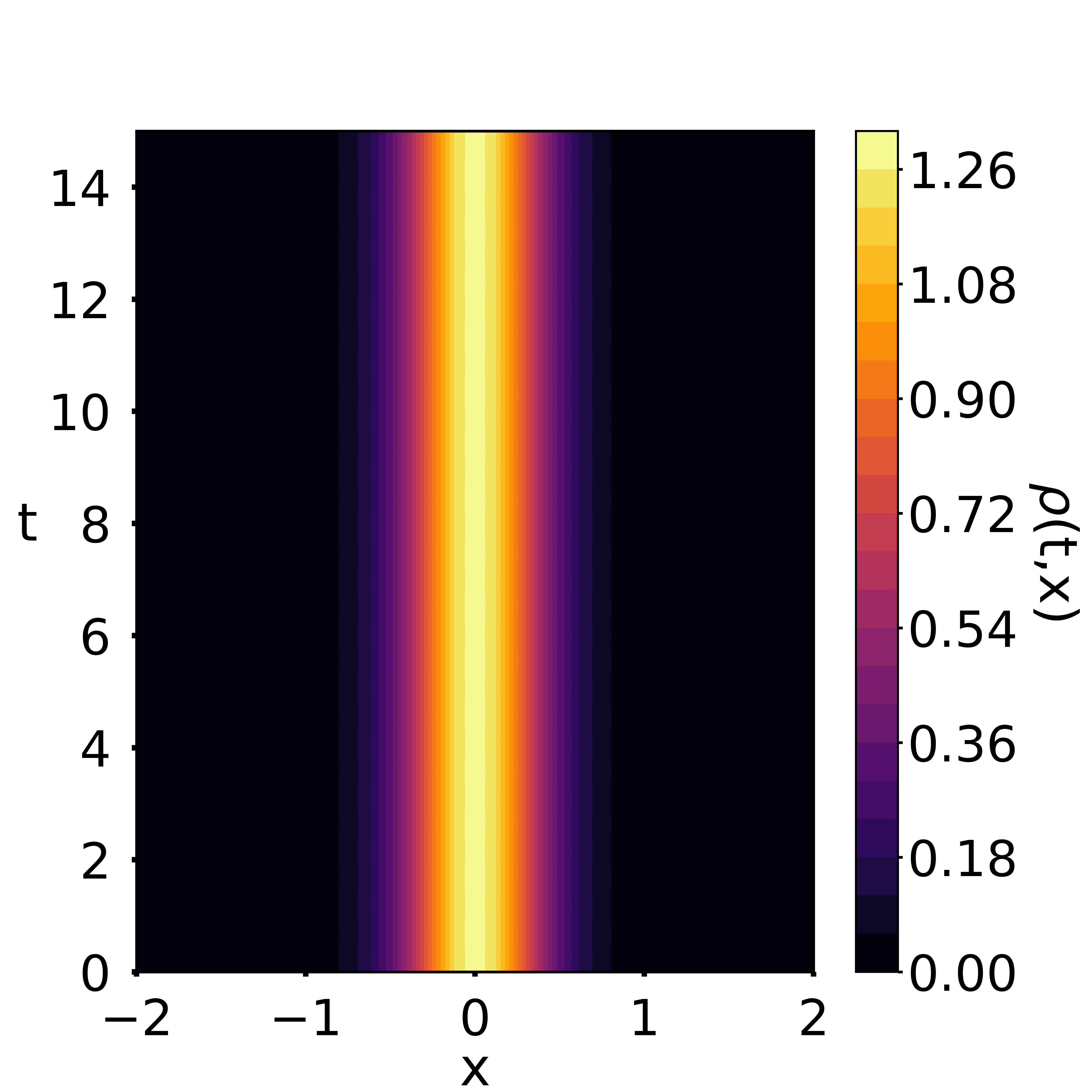}
		        \caption{$\kappa=3$}
		        \label{rho_3_gs}
		    \end{subfigure}
		    \caption{Time evolution of probability density $\rho (x,t)$ for initial ground state for $R=2$ and $V_0=-20$.}
		    \label{fig:rho_gs}
		\end{figure}

        \begin{figure}[H]
		    \centering
		    \begin{subfigure}{0.32\textwidth}
		    \includegraphics[width=\textwidth]{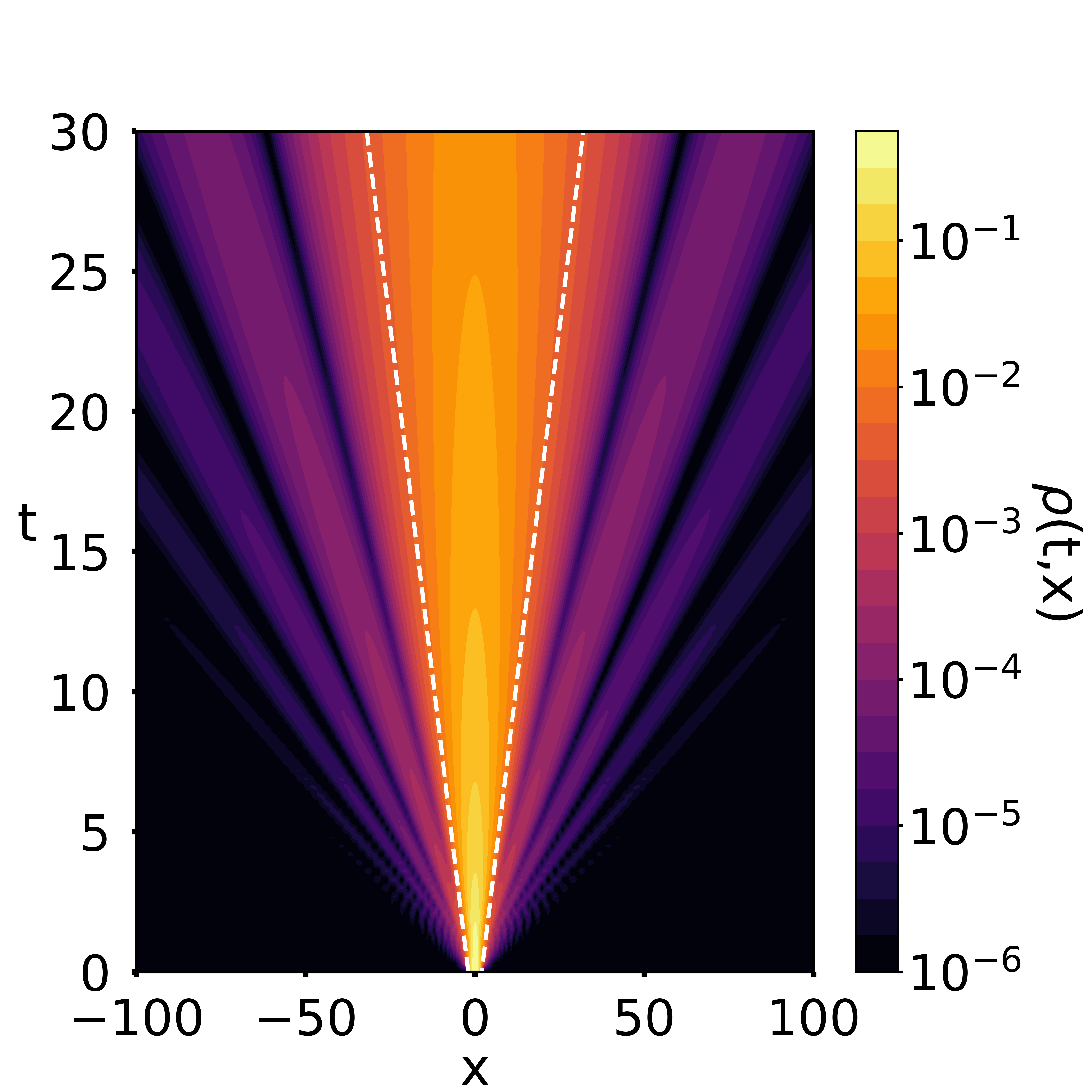}
		        \caption{$\kappa=0.1$}
		        \label{fig:rho_w_01_gs}
		    \end{subfigure}
		    \hfill
		    \begin{subfigure}{0.32\textwidth}
		      \includegraphics[width=\textwidth]{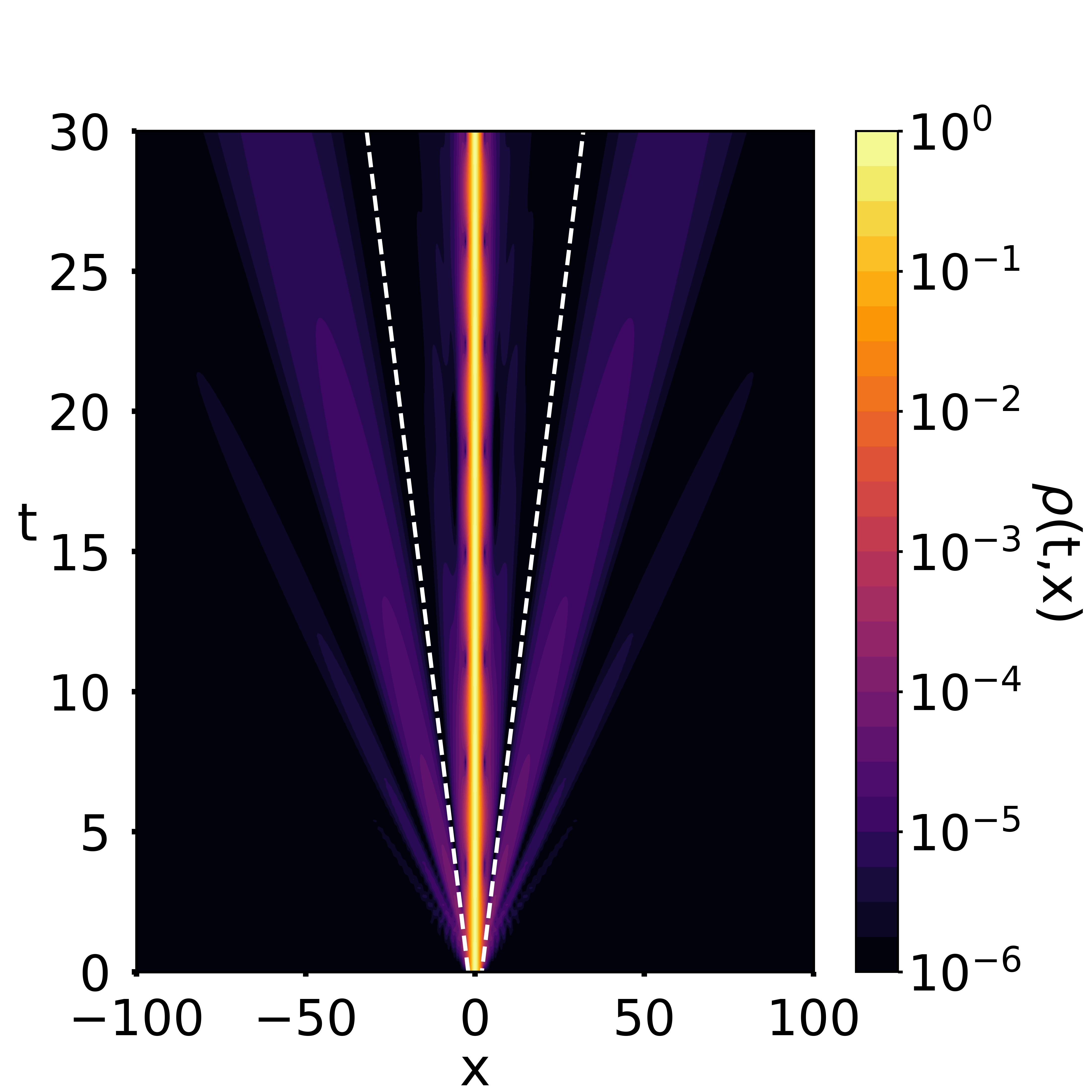}
		        \caption{$\kappa=1$}
		        \label{fig:rho_w_1_gs}
		    \end{subfigure}
            \hfill
		    \begin{subfigure}{0.32\textwidth}
		    \includegraphics[width=\textwidth]{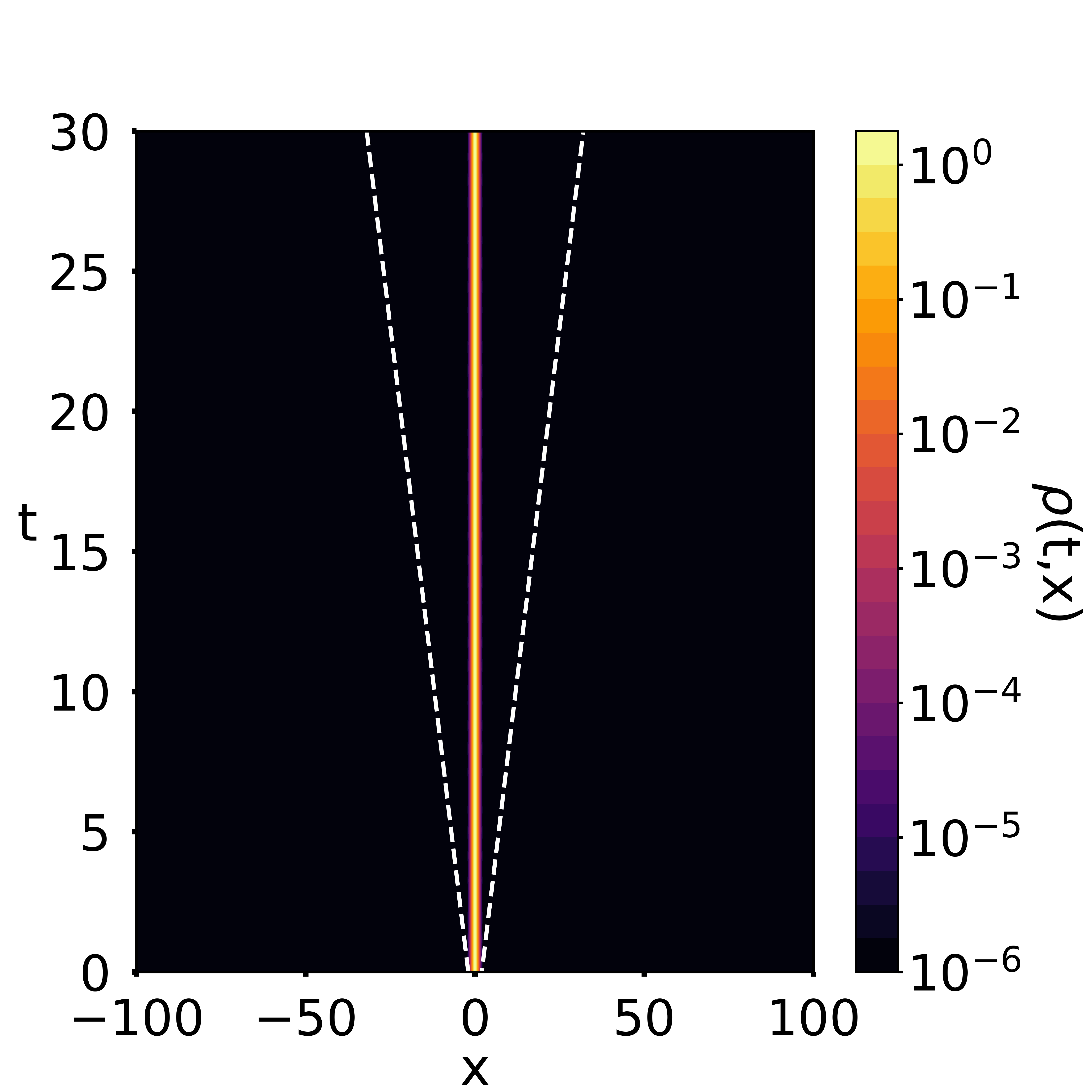}
		        \caption{$\kappa=3$}
		        \label{fig:rho_w_3_gs}
		    \end{subfigure}
		    \caption{Broader perspective on time evolution of probability density $\rho (x,t)$ for initial ground state for $R=2$ and $V_0=-20$. White dashed lines denote the future light cone of the initial trapping region $x \in [-R,R]$.}
		    \label{fig:rho_w_gs}
		\end{figure}

         In order to quantify the amount of probability, which spreads superluminaly we employ the function \eqref{M_max}. In the present context, the quantity $\widetilde{M}(\kappa,R)$ has a natural operational interpretation --- see Sec. \ref{sec:protocol}. Namely, it is the maximal success probability of superluminal bit transfer over time $t \in [0,30]$, for the trap of size $R$ and a given self-coupling strength $\kappa$. 

         In Fig. \ref{fig:M_max_SN} we present how the quantity $\widetilde{M}(\kappa,R)$ changes for a range of $R\in[1,5]$ with step $\Delta R=0.5$ and $\kappa\in[0,4]$ with a non-uniform step. For $\kappa\in[0,1]$ the step is $\Delta \kappa = 0.01$, for $\kappa\in[1,3]$ $\Delta \kappa = 0.02$ and for $\kappa>3$ we have $\Delta \kappa = 1$. It shows that the bigger the trap and the more massive the system, the smaller the quantity $\widetilde{M}$. This is coherent with the phase diagram for the initial state (Fig. \ref{fig:E_ratio_SN}).
         
         We thus clearly see that self-gravity actually mitigates the superluminal effects of the wavepacket spreading. In this sense, the Schr\"odinger--Newton equation is in fact `more compatible' with relativistic causality than the ordinary free Schr\"odinger equation.

		\begin{figure}[H]
            \centering
          %  \begin{subfigure}[t]{0.45\textwidth}
                \resizebox{0.5\textwidth}{!}{\includegraphics{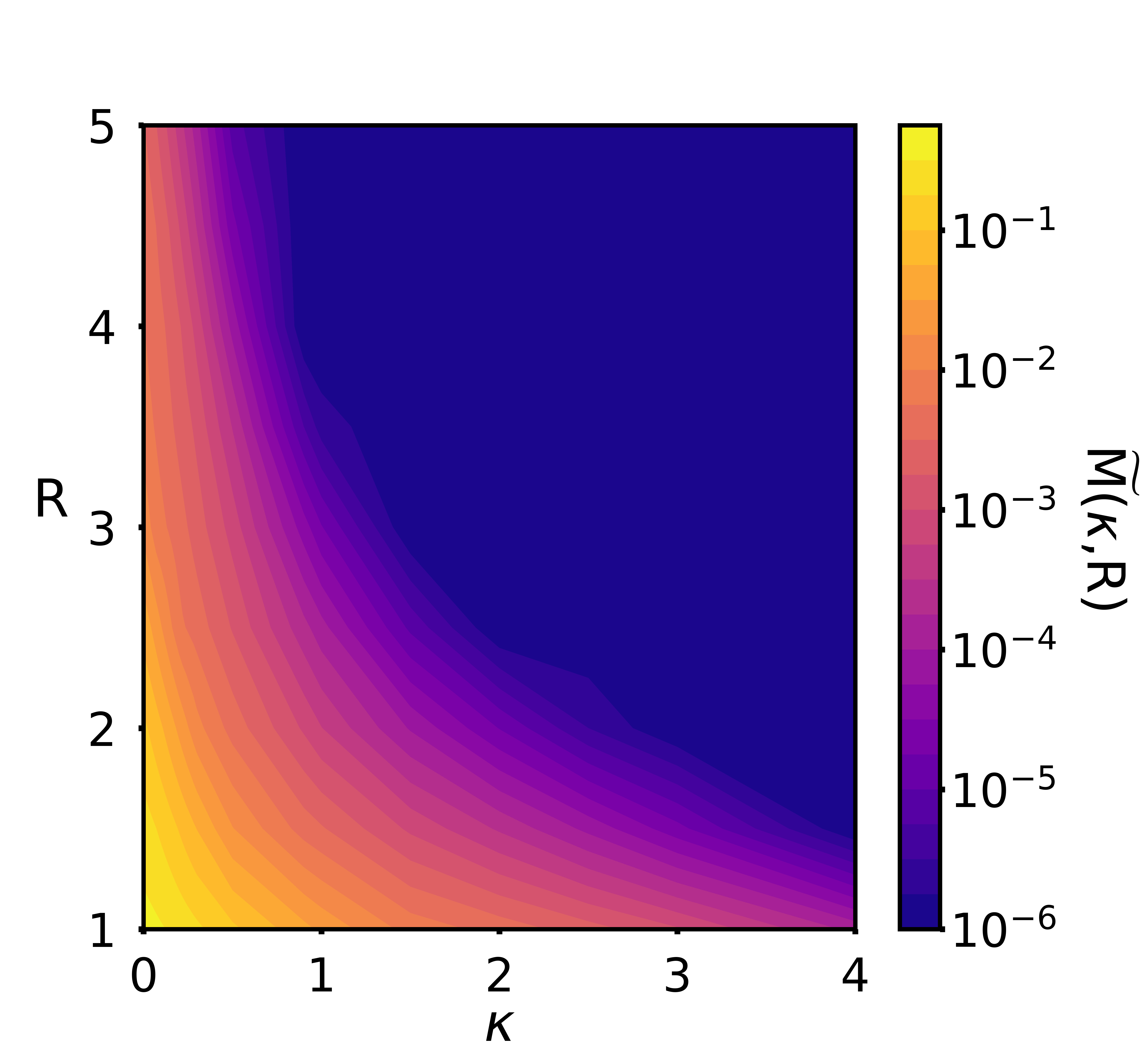}}
                \caption{
                Quantification of superluminal signalling in the Schr\"odinger--Newton model in function of the self-coupling strength and the trap size.
                }
            \label{fig:M_max_SN}
        \end{figure}

\section{Summary and outlook}
\label{sec:sum}

The Schr\"odinger--Newton (SN) equation attempts to describe the behaviour of mesoscopic quantum systems, in which the gravitational force may become relevant. It predicts self-localisation for wave packets of micrometre size and masses of order $10^{10}$\,AMU, a regime that remains experimentally challenging but may become testable with future macroscopic-quantum technologies \cite{OR,QG_RMP,Penrose03,LimitsQM,Howl2019,SN2022}.

A frequent criticism is that SN dynamics enables superluminal signalling \cite{OR,SN_review_14}. Our main message is that this should be assessed operationally and scale-wise, rather than treated as an a priori inconsistency: operational superluminal signalling can already arise for the free Schr\"odinger equation (Sec.~\ref{sec:protocol}) \cite{PRA2017,PRA2020}, which is nonetheless an accepted nonrelativistic model. In this work we therefore focused on quantifying signalling scales and on relating the SN approximation to a fully relativistic setting. In particular, we proved that the Einstein--Dirac system is rigorously consistent with relativistic causality, and we found (qualitatively) that including gravitational self-coupling can be ``less signalling'' than the decoupled free Schr\"odinger evolution.

A central open problem is the extension from the present single-particle setting to coupled multiparticle and/or field-theoretic dynamics, where Gisin-type concerns about nonlinear quantum dynamics become relevant \cite{Gisin1989,Gisin2001}. We advocate a pragmatic viewpoint, consistent with our earlier work \cite{PRA2017,PRA2020}: one should determine the characteristic magnitudes, distances, and times associated with operational signalling in concrete models. The qualitative two-particle SN analysis in \cite{SN_review_14} aligns with this perspective, suggesting that signalling is possible in principle but negligible in practice. Finally, superluminal signalling is not the only structural difficulty for nonlinear composite-system models: nonlinear interactions clash with the standard tensor-product structure \cite{Mielnik01} and with the usual mixed-state description \cite{Diosi25}. These issues motivate, rather than preclude, careful effective modelling.

Looking ahead, three directions seem particularly natural:

\begin{itemize}
\item[(i)] \textbf{Cosmological disorder and Anderson-type suppression of spreading.}
Primordial density perturbations are well described by an approximately Gaussian random field whose power spectrum seeds large-scale structure \cite{Pee20,BBKS86,Dodelson03,Mukhanov05}. In a Newtonian sub-horizon regime, the associated gravitational potential can be viewed as a correlated random potential for SN dynamics \cite{Pee20,Dodelson03,Mukhanov05}. Since disorder induces Anderson localization in the linear single-particle limit \cite{And58,AAL79} and many-body localization in interacting systems \cite{SLS25}, it is natural to ask whether cosmologically motivated randomness can hinder long-time spreading in SN dynamics and thereby suppress large-distance/late-time signalling. Related wave-physics intuition comes from stochastic lensing/scattering and wave-optics effects for waves propagating through inhomogeneous cosmological matter distributions \cite{TakahashiNakamura03,Macquart07}. Because SN is nonlinear and self-consistent, the outcome may resemble the broader phenomenology of nonlinear disordered wave equations (localization versus slow subdiffusion) \cite{PS08,FKS09}, and deserves a systematic study. (In a different, higher-dimensional context, Anderson-type localization has even been invoked to localize gravity itself in disordered setups \cite{Rothstein13}.)

\item[(ii)] \textbf{Nonlinear effective many-body dynamics and Lieb--Robinson violations.}
While microscopic many-body quantum dynamics is linear and constrained by Lieb--Robinson bounds \cite{LB72}, practical approximations (mean field, Hartree/Hartree--Fock, Gutzwiller, tensor networks, etc.) typically produce nonlinear Schr\"odinger-type effective equations \cite{LSA12}. In direct analogy with the present work, it is natural to quantify how such nonlinear effective descriptions can exhibit operational violations of Lieb--Robinson bounds, how these depend on parameters, and how they correlate with the ``classicality'' of the effective dynamics. Connections to Bell-type and Leggett--Garg-type scenarios are already emerging in our recent studies \cite{TL25,MFL25}.

\item[(iii)] \textbf{Analogue platforms in ultracold atoms.}
SN-type mean-field structure also arises in ultracold gases where engineered fields induce an attractive $1/r$ interaction \cite{PhysRevLett.84.5687,PhysRevA.63.031603}. Such systems realise self-bound condensates and SN-analogous collective dynamics \cite{Giovanazzi_2001,PhysRevA.65.053616,PhysRevA.76.053604,PhysRevA.78.013615}, with extensions to $1/r^b$ interactions, multicomponent settings, and reduced-dimensional or finite-temperature regimes \cite{PhysRevA.82.023615,Li2014,PENG20167,10.1063/10.0000130}. Recent work on tightly self-trapped 3D modes and pattern-forming excitations further supports their use as an accessible laboratory testbed for SN-motivated phenomena \cite{PhysRevE.111.064207,liu2025detectingcollectiveexcitationsselfgravitating}.
\end{itemize}

\section{Acknowledgements}
{We thank Tomasz Miller for his help with shaping Theorem \ref{thm:ED}. J.O. and M.E. acknowledge support from the National Science Centre in Poland under the research grant Sonata BIS \textit{Beyond Quantum Gravity} (2023/50/E/ST2/00472).
M.L. acknowledges support from:
European Research Council AdG NOQIA; MCIN/AEI (PGC2018-0910.13039/501100011033, CEX2019-000910-S/10.13039/501100011033, Plan National FIDEUA PID2019-106901GB-I00, Plan National STAMEENA PID2022-139099NB, I00, project funded by MCIN/AEI/10.13039/501100011033 and by the ``European Union NextGenerationEU/PRTR'' (PRTR-C17.I1), FPI); QUANTERA DYNAMITE PCI2022-132919, QuantERA II Programme co-funded by European Union’s Horizon 2020 program under Grant Agreement No 101017733; Ministry for Digital Transformation and of Civil Service of the Spanish Government through the QUANTUM ENIA project call - Quantum Spain project, and by the European Union through the Recovery, Transformation and Resilience Plan - NextGenerationEU within the framework of the Digital Spain 2026 Agenda; MICIU/AEI/10.13039/501100011033 and EU (PCI2025-163167);
Fundació Cellex; Fundació Mir-Puig; Generalitat de Catalunya (European Social Fund FEDER and CERCA program;
}

\bibliographystyle{naturemag}
\bibliography{SN}

\end{document}